\newtheorem{theorem}{Theorem}
\begin{document}

\bibliographystyle{IEEEtran}
\title{Optimal Power Allocation for Parameter Tracking in a Distributed Amplify-and-Forward Sensor Network}
\author{\IEEEauthorblockN{Feng Jiang, Jie Chen, and A. Lee Swindlehurst}\\
\IEEEauthorblockA{Center for Pervasive Communications and Computing\\University of California at Irvine\\Irvine, CA 92697, USA\\Email:\{feng.jiang, jie.chen, swindle\}@uci.edu}\thanks{This work was supported by the Air
Force Office of Scientific Research grant FA9550-10-1-0310, and by the
National Science Foundation under grant CCF-0916073.}}
\maketitle
\begin{abstract}
We consider the problem of optimal power allocation in a sensor
network where the sensors observe a dynamic parameter in noise and
coherently amplify and forward their observations to a fusion center
(FC).  The FC uses the observations in a Kalman filter to track the
parameter, and we show how to find the optimal gain and phase of the
sensor transmissions under both global and individual power constraints
in order to minimize the mean squared error (MSE) of the parameter estimate.
For the case of a global power constraint, a closed-form solution can
be obtained.  A numerical optimization is required for individual power
constraints, but the problem can be relaxed to a semidefinite programming
problem (SDP), and we show that the optimal result can be constructed from the
SDP solution.  We also study the dual problem of minimizing global and
individual power consumption under a constraint on the MSE.  As before,
a closed-form solution can be found when minimizing total power, while
the optimal solution is constructed from the output of an SDP when minimizing
the maximum individual sensor power.  For purposes of comparison, we derive
an exact expression for the outage probability on the MSE for equal-power
transmission, which can serve as an upper bound for the case of optimal
power control.  Finally, we present the results of several
simulations to show that the use of optimal power control provides a
significant reduction in either MSE or transmit power compared with a
non-optimized approach (i.e., equal power transmission).

\begin{keywords}
Distributed estimation, distributed tracking, wireless sensor networks, amplify-and-forward networks
\end{keywords}
\end{abstract}

\section{Introduction}
\subsection{Background}
In a distributed analog amplify-and-forward sensor network, the sensor
nodes multiply their noisy observations by a complex factor and transmit
the result to a fusion center (FC).  In a coherent multiple access
channel (MAC), the FC uses the coherent sum of the
received signals to estimate the parameter. It has been shown that for
Gaussian sensor networks, an analog transmission scheme such as this can
achieve the minimum distortion between the source and the recovered signal
\cite{Gastpar:2003, Gastpar:2005, Gastpar:2008}. The key problem in this
setting is designing the multiplication factor for each sensor to meet
some goal in terms of estimation accuracy or power consumption.  Furthermore,
for an optimal solution, these multipliers would have to be updated in scenarios
where the parameter or wireless channels are time-varying.  In this paper,
we focus on tracking a dynamic parameter in a coherent MAC setting.

Most prior work on estimation in distributed amplify-and-forward sensor networks has focused on the situation where the parameter(s) of interest are time-invariant, and either deterministic or i.i.d. Gaussian.  The case of an orthogonal MAC, where the FC has access to the individual signals from each sensor, has been studied in \cite{Cui:2007, Xiao:2008, Bahceci:2008, Matamoros:2011, Fang:2009, Leong:2010, Senol:2008}. For a coherent MAC, relevant work includes \cite{Xiao:2008, Wang:2011, Kar:2012, jiang:20122, Banavar:2010, Leong:2010}. In \cite{Cui:2007, Xiao:2008, Bahceci:2008, Fang:2009}, two kinds of problems were considered: minimizing the estimation error under sum or individual power constraints, and minimizing the sum transmit power under a constraint on the estimation error. Scaling laws for the estimation error with respect to the number of sensors were derived in \cite{Matamoros:2011, Leong:2010} under different access schemes and for different power allocation strategies.  In \cite{Banavar:2010, jiang:20122}, the authors exploited a multi-antenna FC to minimize the estimation error.

More relevant to this paper is interesting recent work by Leong et al, who model the (scalar) parameter of interest using a dynamic Gauss-Markov process and assume the FC employs a Kalman filter to track the parameter \cite{Leong:2011,Leong:20112}. In \cite{Leong:2011}, both the orthogonal and coherent MAC were considered and two kinds of optimization problems were formulated: MSE minimization under a global sum transmit power constraint, and sum power minimization problem under an MSE constraint. An asymptotic expression for the MSE outage probability was also derived assuming a large number of sensor nodes.  The problem of minimizing the MSE outage probability for the orthogonal MAC with a sum power constraint was studied separately in \cite{Leong:20112}.

\subsection{Contributions}
In this paper, we consider scenarios similar to those in \cite{Leong:2011}.  In particular, we focus on the coherent MAC case assuming a dynamic parameter that is tracked via a Kalman filter at the FC.  As detailed in the list of contributions below, we extend the work of \cite{Leong:2011} for the case of a global sum power constraint, and we go beyond \cite{Leong:2011} to study problems where either the power of the individual sensors is constrained, or the goal is to minimize the peak power consumption of individual sensors:
\begin{enumerate}
\item We find a closed-form expression for the optimal complex transmission gains that minimize the MSE under a constraint on the sum power of all sensor transmissions.  While this problem was also solved in \cite{Leong:2011} using the KKT conditions derived in \cite{Xiao:2008}, our approach results in a simpler and more direct solution.  We also examine the asymptotic form of the solution for high total transmit power or high noise power at the FC.
\item We find a closed-form expression for the optimal complex transmission gain that minimizes the sum power under a constraint on the MSE.  In this case, the expression depends on the eigenvector of a particular matrix.  Again, while this problem was also addressed in \cite{Leong:2011}, the numerical solution therein is less direct than the one we obtain.  In addition, we find an asymptotic expression for the sum transmit power for a large number of sensors.
\item We show how to find the optimal transmission gains that minimize the MSE under individual sensor power constraints by relaxing the problem to a semi-definite programming (SDP) problem, and then proving that the optimal solution can be constructed from the SDP solution.
\item We show how to find the optimal transmission gains that minimize the maximum individual power over all of the sensors under a constraint on the maximum MSE.  Again, we solve the problem using SDP, and then prove that the optimal solution can be constructed from the SDP solution.
\item For the special case where the sensor nodes use equal power transmission, we derive an exact expression for the MSE outage probability.
\end{enumerate}
A subset of the above results were briefly presented in an earlier conference paper \cite{Jiang:2012}.

\subsection{Organization}
The rest of the paper is organized as follows. Section~\ref{sec:sys} describes the system model for the parameter tracking problem and provides an expression for the MSE obtained at the FC using a standard Kalman filter. Section~\ref{sec:msesum} investigates the MSE minimization problem under the assumption that the sensor nodes have a sum transmit power constraint. The MSE minimization problem with individual sensor power constraints is formulated and solved in Section~\ref{sec:mseindiv}.  The problems of minimizing the sum power or the maximum individual sensor power with MSE constraints are formulated and optimally solved in Section~\ref{powermin}.  In Section~\ref{outage}, the MSE outage probability for equal power allocation is derived.  Numerical results are presented in Section~\ref{sec:nume} and the conclusions are summarized in Section~\ref{sec:conc}.

\section{System Model}\label{sec:sys}
We model the evolution of a complex-valued dynamic parameter $\theta_n$ using a first-order Gauss-Markov process:
\begin{equation}
\theta_n=\alpha\theta_{n-1}+u_n\nonumber\;,
\end{equation}
where $n$ denotes the time step, $\alpha$ is the correlation parameter and the process noise $u_n$ is zero-mean complex normal with variance $\sigma_u^2$ (denoted by $\mathcal{CN}(0,\sigma_u^2)$).  We assume that $\theta_0$ is zero mean and that the norm $|\alpha| < 1$, so that $\theta_n$ is a stationary process.  Thus, the variance of $\theta_n$ is constant and given by $\sigma_\theta^2 = \sigma_u^2/\left(1-|\alpha|^2\right)$.  A set of $N$ sensors measures $\theta_n$ in the presence of noise; the measurement for the $i$th sensor at time $n$ is described by
\begin{equation}
s_{i,n} = \theta_n + v_{i,n} \; ,\nonumber
\end{equation}
where the measurement noise $v_{i,n}$ is distributed as $\mathcal{CN}(0,\sigma_{v,i}^2)$.  In an amplify-and-forward sensor network, each sensor multiplies its observation by a complex gain factor and transmits the result over a wireless channel to a fusion center (FC).  The FC receives a coherent sum of the signals from all $N$ sensors in additive noise:
\begin{eqnarray*}
y_n & = & \sum_{i=1}^N h_{i,n} a_{i,n} s_{i,n} + w_n \\
& = & \sum_{i=1}^N (h_{i,n} a_{i,n} \theta_n + h_{i,n} a_{i,n} v_{i,n}) + w_n \; ,
\end{eqnarray*}where $h_{i,n}$ is the gain of the wireless channel between sensor $i$ and the FC, $a_{i,n}$ is the complex transmission gain of sensor $i$, and $w_n$ is noise distributed as $\mathcal{CN}(0,\sigma_w^2)$.  This model can be written more compactly in matrix-vector form, as follows:
\begin{equation}
y_n=\mathbf{a}_n^H\mathbf{h}_n\theta_n+\mathbf{a}_n^H\mathbf{H}_n\mathbf{v}_n+w_n\;,\nonumber
\end{equation}
where $\mathbf{h}_n=[h_{1,n},\dots,h_{N,n}]^T$, $(\cdot)^T$ and $(\cdot)^H$ denote the transpose and complex conjugate transpose respectively, $\mathbf{a}_n=[a_{1,n},\dots,a_{N,n}]^{H}$ is a vector containing the conjugate of the sensor transmission gains, $\mathbf{H}_n=\mathrm{diag}\{h_{1,n},\dots,h_{N,n}\}$ is a diagonal matrix, and the measurement noise vector $\mathbf{v}_n=[v_{1,n},\dots,v_{N,n}]^T$ has covariance
$\mathbf{V}=\mathbb{E}\{\mathbf{v}_n\mathbf{v}_n^{H}\}=
\mathrm{diag}\left\{\sigma_{v,1}^2,\cdots,\sigma_{v,N}^2\right\}$.

The FC is assumed to know the statistics of the various noise processes, the current channel state $\mathbf{h}_n$, and the transmission gains $\mathbf{a}_n$, and it uses a standard Kalman filter to track the parameter $\theta_n$ according to the equations below \cite{Kay:1993}:
\begin{itemize}
\item Prediction Step: $\hat{\theta}_{n|n-1}=\alpha\hat{\theta}_{n-1|n-1}$
\item Prediction MSE:
$P_{n|n-1}=\alpha^2P_{n-1|n-1}+\sigma_u^2$
\item Kalman Gain:
\begin{equation}
k_n=\frac{P_{n|n-1}\mathbf{h}_n^H\mathbf{a}_n}{\mathbf{a}_n^H\mathbf{H}_n\mathbf{V}\mathbf{H}_n^H\mathbf{a}_n+P_{n|n-1}\mathbf{a}_n^H\mathbf{h}_n\mathbf{h}_n^H\mathbf{a}_n+\sigma_w^2}\nonumber
\end{equation}
\item Measurement Update:
\begin{equation}
\hat{\theta}_{n|n}=\hat{\theta}_{n|n-1}+k_n\left(y_n-\mathbf{a}_n^H\mathbf{h}_n\hat{\theta}_{n|n-1}\right)\nonumber
\end{equation}
\item Filtered MSE:
\begin{equation}\label{eq:mse}
P_{n|n}=(1-k_n\mathbf{a}_n^H\mathbf{h}_n)P_{n|n-1}\;.
\end{equation}
\end{itemize}
The goal is to determine an optimal choice for the gains $\mathbf{a}_n$ that minimizes the filtered MSE under a power constraint, or that minimizes the power consumed in transmitting the data to the FC under an MSE constraint.  The optimal gains are then fed back to the individual sensors to use at time $n$.

\section{Minimizing MSE under a Power Constraint}\label{sec:msesum}

\subsection{Global Sum Power Constraint}
In this section, we briefly consider the problem of minimizing the MSE under the assumption that the sensor nodes have a sum power constraint.  As mentioned earlier, this problem has already been studied in \cite{Leong:2011}, but the solution we provide here is simpler and more direct.  The optimization problem can be written as
\begin{eqnarray}\label{eq:opt1}
\min_{\mathbf{a_n}} && P_{n|n}\\
s. t. &&\mathbf{a}_n^H\mathbf{D}\mathbf{a}_n\le P_{T}\nonumber\;,
\end{eqnarray}
where $\mathbf{a}_n^H \mathbf{D} \mathbf{a}_n$ and $P_{T}$ respectively represent the actual and total available transmit power, with $\mathbf{D}=\textrm{diag}\{\sigma_{\theta}^2+\sigma_{v,1}^2,\cdots,
\sigma_{\theta}^2+\sigma_{v,N}^2 \}$.  From~(\ref{eq:mse}), minimizing the MSE $P_{n|n}$ is equivalent to maximizing
\begin{eqnarray}
&&k_n\mathbf{a}_n^H\mathbf{h}_n=\frac{P_{n|n-1}\mathbf{a}_n^H\mathbf{h}_n\mathbf{h}_n^H\mathbf{a}_n}{\mathbf{a}_n^H\mathbf{H}_n\mathbf{V}\mathbf{H}^H_n\mathbf{a}_n+P_{n|n-1}\mathbf{a}_n^H\mathbf{h}_n\mathbf{h}_n^H\mathbf{a}_n+\sigma_w^2}\;,\nonumber
\end{eqnarray}
and after a simple manipulation, the optimization problem in~(\ref{eq:opt1}) is equivalent to
\begin{eqnarray}\label{eq:opt2}
\max_{\mathbf{a}_n} &&\frac{\mathbf{a}_n^H\mathbf{h}_n\mathbf{h}_n^H\mathbf{a}_n}{\mathbf{a}_n^H\mathbf{H}_n\mathbf{V}\mathbf{H}^H_n\mathbf{a}_n+\sigma_w^2}\\
s. t. &&\mathbf{a}_n^H\mathbf{D}\mathbf{a}_n\le P_{T}\;.\nonumber
\end{eqnarray}

Denote the optimal solution to~(\ref{eq:opt2}) as
$\mathbf{a}^{*}_n$. It is easy to verify that the objective function of (\ref{eq:opt2}) is monotonically increasing in the norm of $\mathbf{a}_n$, which implies that at the optimal solution, the sum transmit
power constraint should be met with equality
$\mathbf{a}^{*H}_n\mathbf{D}\mathbf{a}^{*}_n=P_{T}$.  Thus~(\ref{eq:opt2}) becomes
a Rayleigh quotient under a quadratic equality constraint.  Since the numerator involves a rank-one quadratic term, a simple closed-form solution is possible. If we define
$\mathbf{B}=\mathbf{H}_n\mathbf{V}\mathbf{H}^H_n+\frac{\sigma_w^2}{P_{T}}\mathbf{D}$,
the optimal solution is given by
\begin{equation}\label{eq:optimal}
\mathbf{a}_n^{*}=\sqrt{\frac{P_{T}}{\mathbf{h}_n^H\mathbf{B}^{-1}\mathbf{D}\mathbf{B}^{-1}
\mathbf{h}_n}}\mathbf{B}^{-1}\mathbf{h}_n \;.
\end{equation}
Note that the phase of each sensor transmission gain is the conjugate of the channel to the FC (recall that $\mathbf{a}_n$ contains the conjugate of these transmission gains).  In \cite{Leong:2011}, this property was assumed from the beginning in order to get an optimization problem with only real-valued variables; however, we see that this phase-matched solution results even without this assumption.

The maximum value of the objective function in~(\ref{eq:opt2}) can be expressed as
\begin{eqnarray}\label{eq:maxvalue}
\frac{\mathbf{a}_n^{*H}\mathbf{h}_n\mathbf{h}_n^H\mathbf{a}_n^{*}}{\mathbf{a}_n^{*H}
(\mathbf{H}_n\mathbf{V}\mathbf{H}^H_n+\frac{\sigma_w^2}{P_T}\mathbf{D})\mathbf{a}_n^{*}}&=&\mathbf{h}_{n}^{H}\mathbf{B}^{-1}\mathbf{h}_n \; .\nonumber
\end{eqnarray}
Given that
\begin{eqnarray}
\mathbf{h}_{n}^{H}\mathbf{B}^{-1}\mathbf{h}_n&\overset{(a)}{<}&\mathbf{h}_{n}^{H}
(\mathbf{H}\mathbf{V}\mathbf{H}^H)^{-1}\mathbf{h}_n\label{eq:7}\\
&=&\sum_{i=1}^N\frac{1}{\sigma_{v,i}^2}\;,\label{eq:72}
\end{eqnarray}
where $(a)$ follows from $\mathbf{B}^{-1}\prec(\mathbf{H}\mathbf{V}\mathbf{H}^H)^{-1}$, a lower bound on the MSE can be obtained by plugging~(\ref{eq:72}) into (\ref{eq:mse}):
\begin{eqnarray}\label{eq:lb}
P_{n|n}&>&\left(1-\frac{1}{1+\frac{1}{\left(\sum_{i=1}^{N}\frac{1}{\sigma_{v,i}^2}\right)P_{n|n-1}}}\right)P_{n|n-1}\nonumber\\
&=&\frac{P_{n|n-1}}{1+\left(\sum_{i=1}^{N}\frac{1}{\sigma_{v,i}^2}\right)P_{n|n-1}}\;.
\end{eqnarray}
Equation~(\ref{eq:7}) becomes an equality when $\sigma_w^2/P_T \rightarrow 0$ or when the signal-to-noise-ratio (SNR) at the FC is very high, and the resulting optimal sensor transmission gains become
\begin{equation}\label{eq:optimal2}
\mathbf{a}_{n}^{*}\!=\!\!\sqrt{\frac{P_T}{\sum_{i=1}^N\frac{1}{\sigma_{v,i}^4|h_{n,i}|^2
(\sigma_{\theta}^2+\sigma_{v,i}^2)}}}\left[\frac{1}{h_{1,n}\sigma_{v,1}^2} \;, \cdots \;,  \frac{1}{h_{N,n}\sigma_{v,N}^2}\right]^H \; .
\end{equation}In this case, sensors with small channel gains or low measurement noise are allocated more transmit power. On the other hand, for low SNR at the FC where $\sigma_w^2/P_T \rightarrow \infty$, we have $\mathbf{B} \approx \frac{\sigma_w^2}{P_T} \mathbf{D}$, and hence from~(\ref{eq:optimal}) the optimal gain vector is proportional to
\begin{equation}\label{eq:optimal2a}
\mathbf{a}_{n}^{*}\! \propto \!\! \left[\frac{h_{1,n}}{\sigma_\theta^2+\sigma_{v,1}^2} \;, \cdots \;,  \frac{h_{N,n}}{\sigma_\theta^2+\sigma_{v,N}^2} \right]^H \; .
\end{equation}
Interestingly, unlike the high SNR case, for low SNR the sensors with large channel gains are assigned higher power.  This observation will be highlighted later in the simulations of Section~\ref{sec:nume}.

\subsection{Individual Power Constraints}\label{sec:mseindiv}

In a distributed sensor network, it is more likely that the power of the individual sensors would be constrained, rather than the total sum power of the network.  As seen in the previous section, when the SNR at the FC is high (low), a weak (strong) channel for a given sensor can lead to a high transmission power that the sensor may not be able to support.  Thus, in this section we address the problem of minimizing the MSE under individual sensor power constraints, as follows:
\begin{eqnarray}\label{eq:opt4}
\min_{\mathbf{a}_n} &&P_{n|n}\\
s. t. && |a_{i,n}|^2(\sigma_{\theta}^2+\sigma_{v,i}^2)\le P_{T,i}\;,\quad i=1,\cdots, N\nonumber\;,
\end{eqnarray}
where $P_{T,i}$ is the maximum transmit power available at the $i$th sensor
node. Similar to~(\ref{eq:opt1}), problem (\ref{eq:opt4}) can be rewritten as
\begin{eqnarray}\label{eq:opt42}
\max_{\mathbf{a}_n} &&\frac{\mathbf{a}_n^H\mathbf{h}_n\mathbf{h}_n^H\mathbf{a}_n}{\mathbf{a}_n^H\mathbf{H}_n\mathbf{V}\mathbf{H}^H_n\mathbf{a}_n+\sigma_w^2}\\
s. t. && |a_{i,n}|^2(\sigma_{\theta}^2+\sigma_{v,i}^2)\le P_{T,i}\;,\quad i=1,\cdots, N\nonumber\;.
\end{eqnarray}
Problem (\ref{eq:opt42}) is a quadratically constrained ratio of
quadratic functions (QCRQ), and as explained below we will use the approach of \cite{Beck:2010} to transform the QCRQ problem into a relaxed SDP problem.

Introduce a real auxiliary variable $t$ and define
$\tilde{\mathbf{a}}_{n}=t\mathbf{a}_n$, so that problem (\ref{eq:opt42})
is equivalent to
\begin{eqnarray}\label{eq:opt5}
\max_{\mathbf{a}_n, t} &&\frac{\tilde{\mathbf{a}}_n^H\mathbf{h}_n\mathbf{h}_n^H\tilde{\mathbf{a}}_n}{\tilde{\mathbf{a}}_n^H\mathbf{H}_n\mathbf{V}\mathbf{H}^H_n\tilde{\mathbf{a}}_n+\sigma_w^2t^2}\\
s. t. &&\tilde{\mathbf{a}}_n^H\mathbf{D}_i\tilde{\mathbf{a}}_n\le t^2P_{T,i}\;,\quad i=1,\cdots, N\nonumber\\
      &&t\neq 0\;, \nonumber
\end{eqnarray}
where $\mathbf{D}_i=\textrm{diag}\{0,\cdots,0,
\sigma_{\theta}^2+\sigma_{v,i}^2,0,\cdots,0\}$.  We can further
rewrite problem (\ref{eq:opt5}) as
\begin{eqnarray}\label{eq:opt6}
\max_{\mathbf{a}_n, t} &&\tilde{\mathbf{a}}_n^H\mathbf{h}_n\mathbf{h}_n^H\tilde{\mathbf{a}}_n\\
s. t. &&\tilde{\mathbf{a}}_n^H\mathbf{H}_n\mathbf{V}\mathbf{H}^H_n\tilde{\mathbf{a}}_n+\sigma_w^2t^2=1 \nonumber\\
      &&\tilde{\mathbf{a}}_n^H\mathbf{D}_i\tilde{\mathbf{a}}_n\le t^2P_{T,i},\quad i=1,\cdots, N\;.\nonumber
\end{eqnarray}
Note that the constraints in problem (\ref{eq:opt6}) already guarantee
that $t\neq 0$, so this constraint is removed.

Define $\bar{\mathbf{a}}_n=[\tilde{\mathbf{a}}^{H}_n \; t]^{H}$ and the matrices
\begin{equation}
\bar{\mathbf{H}}_n=\left[\begin{array}{cc}\mathbf{h}_n\mathbf{h}_n^H&0\\
                \mathbf{0}^T& \mathbf{0}
                \end{array}\right],\qquad
                \bar{\mathbf{C}}_n=\left[\begin{array}{cc}\mathbf{H}_n\mathbf{V}\mathbf{H}^H_n& \mathbf{0}\\
                \mathbf{0}^T&\sigma_w^2
                \end{array}\right],\qquad
                \bar{\mathbf{D}}_i=\left[\begin{array}{cc}\mathbf{D}_i& \mathbf{0}\\
                \mathbf{0}^T&-P_{T,i}
               \end{array}\right] \; ,\nonumber
\end{equation}
so that problem (\ref{eq:opt6}) can be written in the compact form
\begin{eqnarray}\label{eq:opt7}
\max_{\bar{\mathbf{a}}_n} &&\bar{\mathbf{a}}_n^H\bar{\mathbf{H}}_n\bar{\mathbf{a}}_n\\
s. t. &&\bar{\mathbf{a}}_n^H\bar{\mathbf{C}}_n\bar{\mathbf{a}}_n=1 \nonumber\\
      &&\bar{\mathbf{a}}_n^H\bar{\mathbf{D}}_i\bar{\mathbf{a}}_n\le 0\;,\quad i=1,\cdots, N\;. \nonumber
\end{eqnarray}
Defining the $(N+1)\times(N+1)$ matrix $\bar{\mathbf{A}} = \bar{\mathbf{a}}_n \bar{\mathbf{a}}_n^H$, problem~(\ref{eq:opt7}) is equivalent to
\begin{eqnarray}\label{eq:opt8}
\max_{\bar{\mathbf{A}}}&&\mathrm{tr}(\bar{\mathbf{A}}\bar{\mathbf{H}}_n)\\
s. t. &&\mathrm{tr}(\bar{\mathbf{A}}\bar{\mathbf{C}}_n)=1 \nonumber\\
&&\mathrm{tr}(\bar{\mathbf{A}}\bar{\mathbf{D}}_i)\le 0\;,\quad i=1,\cdots, N\nonumber\\
&&\mathrm{rank}(\bar{\mathbf{A}})=1 \nonumber\\
&&\bar{\mathbf{A}}\succeq 0\;.\nonumber
\end{eqnarray}

Were it not for the rank constraint, the problem in~(\ref{eq:opt8}) would be a standard SDP problem and could be solved in polynomial time using (for example) the interior point method.  Given the difficulty of handling the rank constraint, we choose to relax it and solve the simpler problem
\begin{eqnarray}\label{eq:opt9}
\max_{\bar{\mathbf{A}}}&&\mathrm{tr}(\bar{\mathbf{A}}\bar{\mathbf{H}}_n)\\
s. t. &&\mathrm{tr}(\bar{\mathbf{A}}\bar{\mathbf{C}}_n)=1  \nonumber\\
&&\mathrm{tr}(\bar{\mathbf{A}}\bar{\mathbf{D}}_i)\le 0\;,\quad i=1,\cdots, N\;.\nonumber\\
&&\bar{\mathbf{A}}\succeq 0\;,\nonumber
\end{eqnarray}
which would provide an upper bound on the optimal value of problem (\ref{eq:opt42}), and would in general lead to a suboptimal solution for the vector $\mathbf{a}_n$ of transmission gains.  However, in the following we show that the optimal solution to the original problem in~(\ref{eq:opt4}) can be constructed from the solution to the relaxed SDP problem in~(\ref{eq:opt9}).  The optimality of a rank-relaxed SDP problem similar to the one we consider here has previously been noted in \cite{Liao:2011}, but for a different problem related to physical layer security.  To describe how to find the optimal solution from the rank-relaxed problem in~(\ref{eq:opt9}), define $\bar{\mathbf{A}}^{*}$ to be the solution to~(\ref{eq:opt9}), $\bar{\mathbf{A}}^{*}_{l,m}$ as the $(l,m)$th element of
$\bar{\mathbf{A}}^{*}$, and $\bar{\mathbf{A}}_N^{*}$ as the $N$th order leading principal submatrix of $\bar{\mathbf{A}}^{*}$ formed by deleting the $(N+1)$st row and column of $\bar{\mathbf{A}}^{*}$.  Then the optimal solution can be found via the following theorem. 

\begin{theorem}\label{theorem1}
Define the optimal solution to problem (\ref{eq:opt9}) as
$\bar{\mathbf{A}}^{*}$. Then
$\bar{\mathbf{A}}^{*}_N=\mathbf{a}\mathbf{a}^{H}$ is rank-one and the optimal
solution to problem (\ref{eq:opt4}) is given by
\begin{equation}
\mathbf{a}^{*}_n=\frac{1}{\sqrt{\bar{\mathbf{A}}_{N+1,N+1}^{*}}}\mathbf{a}\;.\nonumber
\end{equation}
\end{theorem}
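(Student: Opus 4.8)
The plan is to exploit the special block-and-diagonal structure of the data matrices so that the rank relaxation in~(\ref{eq:opt9}) turns out to be \emph{tight}. First I would observe that, because $\bar{\mathbf{H}}_n$ carries $\mathbf{h}_n\mathbf{h}_n^H$ in its leading $N\times N$ block and zeros elsewhere, the objective collapses to $\mathrm{tr}(\bar{\mathbf{A}}\bar{\mathbf{H}}_n)=\mathbf{h}_n^H\bar{\mathbf{A}}_N\mathbf{h}_n$; that is, it depends on $\bar{\mathbf{A}}$ only through the leading submatrix $\bar{\mathbf{A}}_N$, and only through the single quadratic form along $\mathbf{h}_n$. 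Next I would note the crucial point that $\mathbf{H}_n\mathbf{V}\mathbf{H}_n^H$ and each $\mathbf{D}_i$ are \emph{diagonal}, so that both the equality constraint $\mathrm{tr}(\bar{\mathbf{A}}\bar{\mathbf{C}}_n)=1$ and the inequalities $\mathrm{tr}(\bar{\mathbf{A}}\bar{\mathbf{D}}_i)\le 0$ involve $\bar{\mathbf{A}}_N$ only through its diagonal entries $(\bar{\mathbf{A}}_N)_{ii}$, together with the corner entry $\bar{\mathbf{A}}_{N+1,N+1}$. In particular, the off-diagonal part of $\bar{\mathbf{A}}_N$ is entirely free in the constraints and appears nowhere except in the objective.

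The heart of the argument is then a rank-one maximization. I would fix the diagonal of $\bar{\mathbf{A}}_N$ and the corner entry $\bar{\mathbf{A}}_{N+1,N+1}=c$ at the values they take in an optimizer $\bar{\mathbf{A}}^{*}$; every constraint is then automatically met by \emph{any} positive semidefinite completion, so it only remains to maximize $\mathbf{h}_n^H\bar{\mathbf{A}}_N\mathbf{h}_n$ over such completions. Using a Gram factorization $\bar{\mathbf{A}}_N=[\mathbf{x}_1,\dots,\mathbf{x}_N]^H[\mathbf{x}_1,\dots,\mathbf{x}_N]$, the objective equals $\bigl\|\sum_i h_{i,n}\mathbf{x}_i\bigr\|^2$, which by the triangle/Cauchy--Schwarz inequality is bounded above by $\bigl(\sum_i |h_{i,n}|\,\|\mathbf{x}_i\|\bigr)^2=\bigl(\sum_i |h_{i,n}|\sqrt{(\bar{\mathbf{A}}_N)_{ii}}\bigr)^2$, with equality precisely when the $\mathbf{x}_i$ are collinear, i.e.\ when $\bar{\mathbf{A}}_N$ is rank one. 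A rank-one matrix $\mathbf{a}\mathbf{a}^H$ with $|a_i|=\sqrt{(\bar{\mathbf{A}}_N)_{ii}}$ and phases chosen so that the terms $a_i^{*}h_{i,n}$ are aligned attains this bound, and padding it as $\mathrm{diag}(\mathbf{a}\mathbf{a}^H,\,c)$ keeps the full matrix positive semidefinite and feasible. Since $\bar{\mathbf{A}}^{*}$ is optimal it cannot do worse than this feasible competitor having the same diagonal, yet Cauchy--Schwarz says it cannot do better either; equality therefore forces $\bar{\mathbf{A}}_N^{*}=\mathbf{a}\mathbf{a}^H$ to be rank one (for the generic case $h_{i,n}\neq 0$).

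Finally I would recover the gains by inverting the substitutions $\tilde{\mathbf{a}}_n=t\mathbf{a}_n$ and $\bar{\mathbf{a}}_n=[\tilde{\mathbf{a}}_n^H\;\,t]^H$. Because a rank-one lift $\bar{\mathbf{a}}_n\bar{\mathbf{a}}_n^H$ has leading block $\tilde{\mathbf{a}}_n\tilde{\mathbf{a}}_n^H$ and corner $t^2$, the identifications $\mathbf{a}=\tilde{\mathbf{a}}_n$ and $t=\sqrt{\bar{\mathbf{A}}_{N+1,N+1}^{*}}$ yield $\mathbf{a}_n^{*}=\mathbf{a}/\sqrt{\bar{\mathbf{A}}_{N+1,N+1}^{*}}$, which is exactly the claimed expression. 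I would then verify feasibility and optimality for the original problem: the individual power constraint $|a_{i,n}^{*}|^2(\sigma_\theta^2+\sigma_{v,i}^2)\le P_{T,i}$ follows directly from $\mathrm{tr}(\bar{\mathbf{A}}^{*}\bar{\mathbf{D}}_i)\le 0$ upon dividing by $c=t^2$, and the value of the scale-invariant ratio in~(\ref{eq:opt42}) at $\mathbf{a}_n^{*}$ equals the normalized SDP optimum $\mathrm{tr}(\bar{\mathbf{A}}^{*}\bar{\mathbf{H}}_n)$, so relaxing the rank constraint in passing from~(\ref{eq:opt4}) to~(\ref{eq:opt9}) loses nothing. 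I expect the main obstacle to be the rank-one step itself: making the Cauchy--Schwarz equality condition fully rigorous, ruling out higher-rank optimizers rather than merely exhibiting one rank-one optimizer, and disposing of the degenerate coordinates with $h_{i,n}=0$ (which drop out of the objective and can be zeroed without affecting the optimum); establishing that~(\ref{eq:opt9}) is \emph{exactly} tight, and not just an upper bound, is where the real work lies.
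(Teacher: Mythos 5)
Your proposal is correct, but it reaches the rank-one conclusion by a genuinely different route than the paper. The paper's proof is a duality argument: it exhibits strictly feasible points for (\ref{eq:opt9}) and its SDP dual (\ref{eq:opt10}), invokes Slater's condition to obtain the complementary-slackness identity $\bar{\mathbf{A}}^{*}\mathbf{G}^{*}=0$ with $\mathbf{G}^{*}=\sum_i y_i^{*}\bar{\mathbf{D}}_i+z^{*}\bar{\mathbf{C}}_n-\bar{\mathbf{H}}_n$, restricts to the leading block to get $\bar{\mathbf{A}}^{*}_N\mathbf{G}^{*}_N=0$, and then uses the rank inequality $\mathrm{rank}(\mathbf{M}+\mathbf{N})\ge|\mathrm{rank}(\mathbf{M})-\mathrm{rank}(\mathbf{N})|$ to show $\mathrm{rank}(\mathbf{G}^{*}_N)\ge N-1$, so that $\mathrm{rank}(\bar{\mathbf{A}}^{*}_N)\le 1$; nonvanishing of the objective then gives equality. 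You instead exploit the fact that $\mathbf{H}_n\mathbf{V}\mathbf{H}_n^H$ and the $\mathbf{D}_i$ are diagonal, so all constraints see $\bar{\mathbf{A}}_N$ only through its diagonal, and you settle the off-diagonal completion by a Gram-factorization/Cauchy--Schwarz maximization. Your argument is more elementary (no dual problem, no Slater points, no rank inequalities) and yields the phase-alignment of the optimal gains with the channel as a byproduct, which the paper only observes separately for the sum-power case; the paper's duality machinery is less transparent but more portable --- it does not depend on the constraint matrices being diagonal, and the same template is reused for Theorem~\ref{theorem3}. Two caveats you correctly flag yourself: the Cauchy--Schwarz equality condition only pins down the coordinates with $h_{i,n}\neq 0$, so in degenerate cases you prove the existence of a rank-one optimizer (obtained by zeroing those coordinates, which leaves all constraints unchanged since the corresponding diagonal entries of $\mathbf{H}_n\mathbf{V}\mathbf{H}_n^H$ vanish) rather than that every optimizer is rank one; and you should record explicitly that $\bar{\mathbf{A}}^{*}_{N+1,N+1}>0$ (if it were zero the power constraints would force $\bar{\mathbf{A}}^{*}_N=\mathbf{0}$, contradicting $\mathrm{tr}(\bar{\mathbf{A}}^{*}\bar{\mathbf{C}}_n)=1$) before dividing by it. Neither issue is a gap in substance; the paper glosses over comparable degeneracies in its own rank count.
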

\begin{IEEEproof}
We first utilize the strong duality between problem (\ref{eq:opt9})
and its dual to find properties of the optimal solution
$\bar{\mathbf{A}}^{*}$. The dual of problem (\ref{eq:opt9}) is
given by \cite{Zhang:2011}:
\begin{eqnarray}\label{eq:opt10}
\min_{y_i,z}&& z\\
s. t. &&\sum_{i=1}^{N}y_i\bar{\mathbf{D}}_i+z\bar{\mathbf{C}}_n-\bar{\mathbf{H}}_n\succeq 0 \nonumber\\
&&y_1,\dots,y_N,z\ge 0\nonumber\;.
\end{eqnarray}
It is easy to verify that there exist strictly feasible points for
problems~(\ref{eq:opt9}) and~(\ref{eq:opt10}).  In particular, for~(\ref{eq:opt9}), we can construct
\begin{equation}
\bar{\mathbf{A}}^f=\textrm{diag}\{ab, \cdots, ab, b\}\;,\nonumber
\end{equation}
where
\begin{align*}
0 < &a <\min_{i}\frac{P_{T,i}}{\sigma_{\theta}^2+\sigma_{v,i}^2}\;, \\
&b = \frac{1}{\sum_{i=1}^{N}a|h_{n,i}|^2\sigma_{v,i}^2+\sigma_w^2} \; .
\end{align*}
For~(\ref{eq:opt10}), we can randomly select $y_i^f>0$, and set
$z^f$ large enough such that
\begin{equation} z^f\!>\max\left\{\!\frac{\mathbf{h}_{n}^{H}\mathbf{h}_n\!+\!\!\sum_{i=1}^Ny_i^fP_{T,i}}{\sigma_w^2},\frac{\mathbf{h}_n^{H}\mathbf{h}_n\!-\!y_i^f(\sigma_{\theta}^2\!+\!\sigma_{v,i}^2)}{|h_{n,i}|^2\sigma_{v,i}^2}\!\right\}\;.\nonumber
\end{equation}
Then, according to Slater's theorem, strong duality holds between the primal problem (\ref{eq:opt9}) and the dual problem (\ref{eq:opt10}) and we have the following complementary condition:
\begin{equation}\label{eq:slater}
\textrm{tr}(\bar{\mathbf{A}}^{*}\mathbf{G}^{*})=0\;,
\end{equation}
where $\mathbf{G}^{*}=\sum_{i=1}^{N}y_i^{*}\bar{\mathbf{D}}_i+z^{*}\bar{\mathbf{C}}_n-\bar{\mathbf{H}}_n$ and $y_{i}^{*}$ and $z^{*}$ denote the optimal solution to problem (\ref{eq:opt10}).
Due to the special structure of $\bar{\mathbf{D}}_i$, $\bar{\mathbf{C}}_n$ and $\bar{\mathbf{H}}_n$, $\mathbf{G}^{*}$ can be expressed as
\begin{equation}
\mathbf{G}^{*}=\left[\begin{array}{cc}\mathbf{G}^{*}_N& \mathbf{0}\\
                \mathbf{0}^T&\mathbf{G}^{*}_{N+1,N+1}
                \end{array}\right]\;,\nonumber
\end{equation}
where $\mathbf{G}^{*}_N=\sum_{i=1}^Ny_i^*\mathbf{D}_i+z^*\mathbf{H}_n\mathbf{V}\mathbf{H}_n^H-
\mathbf{h}_n\mathbf{h}_n^H$ and $\mathbf{G}^{*}_{N+1,N+1}=z^{*}\sigma_w^2-\sum_{i=1}^{N}y_i^{*}P_{T,i}$.
Since both $\bar{\mathbf{A}}^{*}$ and $\mathbf{G}^{*}$ are positive semidefinite, (\ref{eq:slater}) is equivalent to
\begin{equation}
\bar{\mathbf{A}}^{*}\mathbf{G}^{*}=0\;.\nonumber
\end{equation}
Additionally, with consideration of the structure of $\mathbf{G}^{*}$, we have
\begin{equation}
\bar{\mathbf{A}}^{*}_N\mathbf{G}^{*}_N=0\;.\nonumber
\end{equation}

Define $\mathbf{V}_{G}$ as a set of vectors
orthogonal to the row space of
$\mathbf{G}^{*}_{N}$.  Then the column vectors of
$\bar{\mathbf{A}}_{N}^{*}$ must belong to span($\mathbf{V}_G$) and
$\textrm{rank}(\bar{\mathbf{A}}^{*}_N)\le
\textrm{rank}(\mathbf{V}_G)$.  For any two matrices $\mathbf{M}$ and
$\mathbf{N}$, we have \cite{Gentle:2007} that
$\textrm{rank}(\mathbf{M}+\mathbf{N})\ge|\textrm{rank}(\mathbf{M})-\textrm{rank}(\mathbf{N})|$, so
\begin{eqnarray}
\textrm{rank}(\mathbf{G}_N^{*})\!\!\!&\ge&\!\!\!\textrm{rank}\left(\sum_{i=1}^Ny_i^*\mathbf{D}_i+z^*\mathbf{H}_n\mathbf{V}\mathbf{H}_n^H\right)\!\!-\textrm{rank}(\mathbf{h}_n\mathbf{h}_n^H)\nonumber\\
&=&\!\!\!N-1\;.\nonumber
\end{eqnarray}
and
\begin{equation}\label{eq:rank1}
\textrm{rank}(\mathbf{V}_G) = N-\textrm{rank}(\mathbf{G}_N^{*}) \le 1 \;.
\end{equation}
Since $\text{tr}(\bar{\mathbf{A}}^{*}\bar{\mathbf{H}})=\mathbf{h}_n^H\bar{\mathbf{A}}^*_N\mathbf{h}_n$ and $\text{tr}(\bar{\mathbf{A}}^{*}\bar{\mathbf{H}})>\text{tr}(\bar{\mathbf{A}}^{f}\bar{\mathbf{H}})>0$, we have
\begin{equation}\label{eq:rank2}
\bar{\mathbf{A}}^*_N\neq 0\;, \qquad \textrm{rank}(\bar{\mathbf{A}}^*_N)\ge 1\;.
\end{equation}
Combining (\ref{eq:rank1}) and (\ref{eq:rank2}) then leads to
\begin{equation}
\textrm{rank}(\bar{\mathbf{A}}^*_N)=1\;.\nonumber
\end{equation}

Although at this point we don't know whether the optimal solution $\bar{\mathbf{A}}^*$ is rank-one, we can construct a rank-one optimal solution based on $\bar{\mathbf{A}}^*$.
Define the rank-one decomposition of $\bar{\mathbf{A}}_N^{*}$ as
$\bar{\mathbf{A}}_N^{*}=\mathbf{a}\mathbf{a}^{H}$, so that the optimal
rank-one solution to problem (\ref{eq:opt9}) is
\begin{equation}\label{eq:rone}
\bar{\mathbf{A}}^{'}=\bar{\mathbf{a}}^{*}\bar{\mathbf{a}}^{*H},
\end{equation}
where $\bar{\mathbf{a}}^{*}=\left[\mathbf{a}^{H} \; \sqrt{\bar{\mathbf{A}}_{N+1,N+1}^*}\right]^{H}$. It is easy to verify that the rank-one matrix $\bar{\mathbf{A}}^{'}$ can achieve the same result for problem (\ref{eq:opt9}) as  $\bar{\mathbf{A}}^*$.

Since (\ref{eq:opt42}) is equivalent to problem (\ref{eq:opt4}) and (\ref{eq:opt8}), and (\ref{eq:opt9}) is realized from problem (\ref{eq:opt8}) by relaxing the rank-one constraint, in general the solution to~(\ref{eq:opt9}) provides an upper bound on the optimal value achieved by~(\ref{eq:opt42}). If the optimal solution to~(\ref{eq:opt4}) is $\mathbf{a}^{*}_n$, then
\begin{equation}\label{eq:ub}
\frac{\mathbf{a}_n^{*H}\mathbf{h}_n\mathbf{h}_n^H\mathbf{a}_n^{*}}{\mathbf{a}_n^{*H}\mathbf{H}\mathbf{V}\mathbf{H}^H\mathbf{a}_n^{*}+\sigma_w^2}\le\textrm{tr}(\bar{\mathbf{A}}^{*}\bar{\mathbf{H}})\;,
\end{equation}
where $\mathbf{a}_n^{*}$ and $\bar{\mathbf{A}}^{*}$ are the optimal solutions to problems (\ref{eq:opt4}) and (\ref{eq:opt9}) respectively. Equality can be achieved in ~(\ref{eq:ub}) provided that an optimal rank-one solution exists for~(\ref{eq:opt9}), and (\ref{eq:rone}) indicates that such a rank-one solution exists. In the following, we will show how to construct $\mathbf{a}_n^{*}$ based on $\bar{\mathbf{A}}^{*}$. According to problem (\ref{eq:opt9}), since
$\textrm{tr}(\bar{\mathbf{A}}^{*}\bar{\mathbf{C}}_n)=1$ and $\bar{\mathbf{A}}\succeq 0$, then we have $\bar{\mathbf{A}}^{*}\neq 0$ and further $\bar{\mathbf{A}}^{*}_{N+1,N+1}>0$. Based on
$\bar{\mathbf{a}}^{*}$, the optimal solution to~(\ref{eq:opt4}) is given by
\begin{equation}\label{eq:opts}
\mathbf{a}^{*}_n=\frac{\bar{\mathbf{a}}^{*}}{\sqrt{\bar{\mathbf{A}}_{N+1,N+1}^*}} \;,
\end{equation}
and plugging (\ref{eq:opts}) into (\ref{eq:ub}) we have
\begin{equation}
\frac{\mathbf{a}_n^{*H}\mathbf{h}_n\mathbf{h}_n^H\mathbf{a}_n^{*}}{\mathbf{a}_n^{*H}\mathbf{H}\mathbf{V}\mathbf{H}^H\mathbf{a}_n^{*}+\sigma_w^2}=\textrm{tr}(\bar{\mathbf{A}}^{*}\bar{\mathbf{H}})\;,\nonumber
\end{equation}
which verifies the optimality of $\mathbf{a}^{*}_n$.
\end{IEEEproof}

\section{Minimizing Transmit Power under an MSE Constraint}\label{powermin}

In this section, we consider the converse of the problems investigated in Section~\ref{sec:msesum}.  We first look at the problem addressed in \cite{Leong:2011}, where the goal is to minimize the sum power consumption of all the sensors under the constraint that the MSE is smaller than some threshold.  The asymptotic behavior of the solution is then characterized for a large number of sensors, $N$.  Next we study the case where the maximum individual transmit power of any given sensor is minimized under the MSE constraint.

\subsection{Minimizing Sum Transmit Power}

We can express the problem of minimizing the sum transmit power under the constraint that the MSE is smaller than $\epsilon$ as follows:
\begin{eqnarray}\label{eq:minsp}
\min_{\mathbf{a}_n} &&\mathbf{a}_n^H\mathbf{D}\mathbf{a}_n\\
s. t. &&P_{n|n}\le \epsilon \; . \nonumber
\end{eqnarray}
To make~(\ref{eq:minsp}) feasible, according to~(\ref{eq:mse}) and~(\ref{eq:lb}) the value of $\epsilon$ should satisfy
\begin{equation}\label{eq:feas}
\frac{P_{n|n-1}}{1+\left(\sum_{i=1}^{N}\frac{1}{\sigma_{v,i}^2}\right)P_{n|n-1}} \le \epsilon \le P_{n|n-1}\;.
\end{equation}
As discussed earlier, the MSE is monotonically decreasing in the norm of $\mathbf{a}_n$, so it is clear that setting $P_{n|n}=\epsilon$ results in the minimum possible transmit power, which we refer to as $P_T^*$.  Conceptually, the problem can be solved by finding the value of $P_T^*$ for which $P_{n|n}=\epsilon$, and then substituting this value into the solution found in~(\ref{eq:optimal}):
\begin{equation}
\mathbf{a}_n^{*}=\sqrt{\frac{P_T^*}{\mathbf{h}_n^H\mathbf{B}^{-1}
\mathbf{D}\mathbf{B}^{-1}\mathbf{h}_n}}\mathbf{B}^{-1}\mathbf{h}_n \; .\nonumber
\end{equation}
Unlike \cite{Leong:2011}, where an unspecified numerical procedure was required to solve this problem, in the following we present a direct ``closed-form'' solution that finds the result in terms of the eigenvalue and eigenvector of a particular matrix.

Assuming that $\epsilon$ satisfies the feasibility constraint of~(\ref{eq:feas}), we use~(\ref{eq:mse}) and $P_{n|n}=\epsilon$ to convert~(\ref{eq:minsp}) to the following form:
\begin{eqnarray}\label{eq:minspequ}
\min_{\mathbf{a}_n} &&\mathbf{a}_n^H\mathbf{D}\mathbf{a}_n\\
s. t. &&\mathbf{a}^{H}_{n}\mathbf{E}_n\mathbf{a}_n\ge\left(\frac{P_{n|n-1}}{\epsilon}-1\right)\sigma_{w}^2\;,\nonumber
\end{eqnarray}
where $\mathbf{E}_n=P_{n|n-1}\mathbf{h}_n\mathbf{h}_n^H-\left(\frac{P_{n|n-1}}{\epsilon}-1\right)
\mathbf{H}_n\mathbf{V}\mathbf{H}^H_n$. It's obvious that the constraint in problem (\ref{eq:minspequ}) should be active at the optimal solution and we can rewrite problem (\ref{eq:minspequ}) as 
\begin{eqnarray}\label{eq:minspequ3}
\min_{\mathbf{a}_n} &&\frac{\mathbf{a}_n^H\mathbf{D}\mathbf{a}_n}{\mathbf{a}^{H}_{n}\mathbf{E}_n\mathbf{a}_n}\\
s. t. &&\mathbf{a}^{H}_{n}\mathbf{E}_n\mathbf{a}_n=\left(\frac{P_{n|n-1}}{\epsilon}-1\right)\sigma_{w}^2\;.\nonumber
\end{eqnarray}
Since both of $\mathbf{a}_n^H\mathbf{D}\mathbf{a}_n$ and $\mathbf{a}^{H}_{n}\mathbf{E}_n\mathbf{a}_n$ are positive, problem (\ref{eq:minspequ3}) is equivalent to
\begin{eqnarray}\label{eq:minspequ4}
\max_{\mathbf{a}_n} &&\frac{\mathbf{a}^{H}_{n}\mathbf{E}_n\mathbf{a}_n}{\mathbf{a}_n^H\mathbf{D}\mathbf{a}_n}\\
s. t. &&\mathbf{a}^{H}_{n}\mathbf{E}_n\mathbf{a}_n=\left(\frac{P_{n|n-1}}{\epsilon}-1\right)
\sigma_{w}^2\nonumber \; .
\end{eqnarray}
Setting $\mathbf{y}=\mathbf{D}^{\frac{1}{2}}\mathbf{a}_n$, problem (\ref{eq:minspequ4}) becomes a Rayleigh quotient maximization:
\begin{eqnarray}
\max_{\mathbf{y}} &&\frac{\mathbf{y}^{H}\mathbf{D}^{-\frac{1}{2}}\mathbf{E}_n\mathbf{D}^{-\frac{1}{2}}\mathbf{y}}{\mathbf{y}^H\mathbf{y}}\nonumber\\
s. t. &&\mathbf{y}^{H}\mathbf{D}^{-\frac{1}{2}}\mathbf{E}_n\mathbf{D}^{-\frac{1}{2}}\mathbf{y}=
\left(\frac{P_{n|n-1}}{\epsilon}-1\right)\sigma_{w}^2\nonumber \; ,
\end{eqnarray}
whose solution is given by
\begin{equation}
\mathbf{y}^{*}=\sqrt{\frac{\left(\frac{P_{n|n-1}}{\epsilon}-1\right)
\sigma_w^2}{\mathbf{v}_1\mathbf{D}^{-\frac{1}{2}}\mathbf{E}_n\mathbf{D}^{-\frac{1}{2}}
\mathbf{v}_1}}\mathbf{v}_1 \; \nonumber,
\end{equation}
where $\mathbf{v}_1$ denotes the unit-norm eigenvector corresponding to the largest eigenvalue of $\mathbf{D}^{-\frac{1}{2}}\mathbf{E}_n\mathbf{D}^{-\frac{1}{2}}$.  The optimal solution to the original problem in~(\ref{eq:minsp}) is thus
\begin{equation}
\mathbf{a}_n^*=\sqrt{\frac{\left(\frac{P_{n|n-1}}{\epsilon}-1\right)\sigma_w^2}
{\mathbf{v}_1\mathbf{D}^{-\frac{1}{2}}\mathbf{E}_n\mathbf{D}^{-\frac{1}{2}}\mathbf{v}_1}}
\mathbf{D}^{-\frac{1}{2}}\mathbf{v}_1 \;\nonumber .
\end{equation}

The minimum transmit power required to achieve $P_{n|n}=\epsilon$ can be expressed as
\begin{equation}\label{eq:pmax}
P_T^{*}= \mathbf{a}_n^{*H} \mathbf{D} \mathbf{a}_n^* =
\frac{(P_{n|n-1}-\epsilon)\sigma_w^2}{\epsilon\lambda_{\max}\{\mathbf{D}^{-\frac{1}{2}}
\mathbf{E}_n\mathbf{D}^{-\frac{1}{2}}\}} \; ,
\end{equation}
where $\lambda_{\max}(\cdot)$ represents the largest eigenvalue of its matrix argument.  A more precise expression for $P_T^*$ can be found when the number of sensors $N$ is large, as shown in Theorem~\ref{theorem2} below.  The theorem assumes that the channel coefficients are described by the following model:
\begin{equation}\label{eq:hmodel}
h_{i,n}=\frac{\tilde{h}_{i,n}}{d_{i}^\gamma},\quad \tilde{h}_{i,n} \sim \mathcal{CN}(0,1) \; ,
\end{equation}
where $d_i$ denotes the distance between sensor $i$ and the FC, and $\gamma$ is the propagation path-loss exponent.
\begin{theorem}\label{theorem2}
Assume the channels between the sensors and FC obey the model of~(\ref{eq:hmodel}).  When the number of sensors is large, the minimum sum transmit power $P_T^*$ that achieves $P_{n|n}=\epsilon$ is bounded by
\begin{equation*}
\frac{\left(P_{n|n-1}-\epsilon\right)\sigma_w^2}{\epsilon (P_{n|n-1}\mathbf{h}_n^H\mathbf{D}^{-1}\mathbf{h}_n-\xi)}<P_T^{*}<\frac{\left(P_{n|n-1}-
\epsilon\right)\sigma_w^2}{\epsilon P_{n|n-1}\mathbf{h}_n^H\mathbf{D}^{-1}\mathbf{h}_n(1-\zeta)} \; ,
\end{equation*} where random variables $\zeta$, $\xi$ are defined as 
\begin{eqnarray*}
\xi & = & \left(\frac{P_{n|n-1}}{\epsilon}-1\right)\min_{i}\left\{\frac{|h_{i,n}|^2\sigma_{v,i}^2}
{\sigma_{\theta}^2+\sigma_{v,i}^2}\right\} \\
\zeta & = & \frac{\left(\frac{P_{n|n-1}}{\epsilon}-1\right)\max_{i}\left\{\frac{|h_{i,n}|^2\sigma_{v,i}^2}
{\sigma_{\theta}^2+\sigma_{v,i}^2}\right\}}{P_{n|n-1}\mathbf{h}_n^H\mathbf{D}^{-1}\mathbf{h}_n}\;,
\end{eqnarray*}
and $\zeta$, $\xi$ converge to $0$ in probability.
\end{theorem}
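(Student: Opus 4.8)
The plan is to start from the exact expression for the minimum sum power already derived in~(\ref{eq:pmax}), namely $P_T^{*}=(P_{n|n-1}-\epsilon)\sigma_w^2/\big(\epsilon\,\lambda_{\max}\{\mathbf{D}^{-\frac{1}{2}}\mathbf{E}_n\mathbf{D}^{-\frac{1}{2}}\}\big)$, and to sandwich the single scalar $\lambda_{\max}\{\mathbf{D}^{-\frac{1}{2}}\mathbf{E}_n\mathbf{D}^{-\frac{1}{2}}\}$ between a matching pair of bounds. Because this eigenvalue sits in the denominator, an \emph{upper} bound on it produces the stated \emph{lower} bound on $P_T^{*}$, and a \emph{lower} bound on it produces the \emph{upper} bound on $P_T^{*}$. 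Thus the whole theorem reduces to controlling $\lambda_{\max}$ of the matrix $\mathbf{M}:=\mathbf{D}^{-\frac{1}{2}}\mathbf{E}_n\mathbf{D}^{-\frac{1}{2}}$ and then invoking the stochastic channel model~(\ref{eq:hmodel}).

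First I would expose the structure of $\mathbf{M}$. Substituting the definition of $\mathbf{E}_n$ gives $\mathbf{M}=P_{n|n-1}\mathbf{g}\mathbf{g}^H-\mathbf{L}$, where $\mathbf{g}=\mathbf{D}^{-\frac{1}{2}}\mathbf{h}_n$ is a single vector, so that $P_{n|n-1}\mathbf{g}\mathbf{g}^H$ is rank one with top eigenvalue $P_{n|n-1}\|\mathbf{g}\|^2=P_{n|n-1}\mathbf{h}_n^H\mathbf{D}^{-1}\mathbf{h}_n$, and $\mathbf{L}=\big(\tfrac{P_{n|n-1}}{\epsilon}-1\big)\mathbf{D}^{-\frac{1}{2}}\mathbf{H}_n\mathbf{V}\mathbf{H}_n^H\mathbf{D}^{-\frac{1}{2}}$ is diagonal with $i$th entry $\big(\tfrac{P_{n|n-1}}{\epsilon}-1\big)|h_{i,n}|^2\sigma_{v,i}^2/(\sigma_\theta^2+\sigma_{v,i}^2)$. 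By construction $\xi=\lambda_{\min}(\mathbf{L})$ and $\zeta\,P_{n|n-1}\mathbf{h}_n^H\mathbf{D}^{-1}\mathbf{h}_n=\lambda_{\max}(\mathbf{L})$, which is the identification that makes the two bounds collapse to the claimed form. For the upper bound on $\lambda_{\max}(\mathbf{M})$ I use $\mathbf{L}\succeq\xi\mathbf{I}$, so $\mathbf{M}\preceq P_{n|n-1}\mathbf{g}\mathbf{g}^H-\xi\mathbf{I}$; monotonicity of $\lambda_{\max}$ under the positive semidefinite ordering then yields $\lambda_{\max}(\mathbf{M})\le P_{n|n-1}\mathbf{h}_n^H\mathbf{D}^{-1}\mathbf{h}_n-\xi$, which is the lower bound on $P_T^{*}$. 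For the lower bound on $\lambda_{\max}(\mathbf{M})$ I use the variational characterization with the test vector $\mathbf{g}/\|\mathbf{g}\|$, giving $\lambda_{\max}(\mathbf{M})\ge P_{n|n-1}\|\mathbf{g}\|^2-\mathbf{g}^H\mathbf{L}\mathbf{g}/\|\mathbf{g}\|^2\ge P_{n|n-1}\mathbf{h}_n^H\mathbf{D}^{-1}\mathbf{h}_n-\lambda_{\max}(\mathbf{L})=P_{n|n-1}\mathbf{h}_n^H\mathbf{D}^{-1}\mathbf{h}_n(1-\zeta)$, which is the upper bound on $P_T^{*}$. Substituting both inequalities into~(\ref{eq:pmax}) reproduces the stated sandwich exactly.

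The last step, which I expect to be the main obstacle, is establishing $\xi,\zeta\to 0$ in probability under the model~(\ref{eq:hmodel}). The denominator $\mathbf{h}_n^H\mathbf{D}^{-1}\mathbf{h}_n=\sum_{i=1}^N|h_{i,n}|^2/(\sigma_\theta^2+\sigma_{v,i}^2)$ is a sum of $N$ independent non-negative terms and, by a law-of-large-numbers argument, grows like $O(N)$. For $\zeta$ the numerator is a maximum of $N$ terms; since $|\tilde h_{i,n}|^2$ is exponentially distributed, this maximum grows only like $O(\log N)$, so the ratio vanishes. For $\xi$ the controlling quantity is a minimum of $N$ non-negative terms, which tends to $0$ because the exponential density is positive at the origin. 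The care required is to make these order-of-growth statements precise as convergence in probability while accommodating the distinct path losses $d_i^{-2\gamma}$ and noise variances $\sigma_{v,i}^2$; this is a standard extreme-value/order-statistics estimate and is the only genuinely probabilistic part of the argument, the preceding two steps being purely deterministic linear algebra.
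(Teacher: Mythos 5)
Your proposal is correct, and its skeleton coincides with the paper's: both start from the exact expression~(\ref{eq:pmax}), exploit the rank-one-plus-diagonal structure of $\mathbf{D}^{-\frac{1}{2}}\mathbf{E}_n\mathbf{D}^{-\frac{1}{2}}$ to sandwich $\lambda_{\max}$ between $P_{n|n-1}\mathbf{h}_n^H\mathbf{D}^{-1}\mathbf{h}_n-\xi$ and $P_{n|n-1}\mathbf{h}_n^H\mathbf{D}^{-1}\mathbf{h}_n(1-\zeta)$, and then show the two perturbation terms vanish in probability. Your deterministic step is actually spelled out more explicitly than the paper's (which simply asserts the eigenvalue bounds; your Loewner-ordering argument for the upper bound and the Rayleigh-quotient test vector $\mathbf{g}/\|\mathbf{g}\|$ for the lower bound are exactly what is needed to justify them). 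For $\xi$ your argument (minimum of $N$ exponential-type variables concentrates at zero) is the paper's argument verbatim. The one genuine divergence is $\zeta$: the paper bounds $\zeta$ by $\tau\max_i |\tilde h_{i,n}|^2/\sum_{k\neq i}|\tilde h_{k,n}|^2$ and computes the exact law of each self-normalized ratio via the $F$-distribution with parameters $N-1$ and $2$, then shows the $N$th power of its CDF tends to one; you instead propose the cruder but entirely standard estimate that the max of $N$ exponentials grows like $\log N$ while the denominator grows like $N$ by the law of large numbers, so the ratio is $O_P(\log N/N)$. Both routes are valid; the paper's yields an exact finite-$N$ expression for the tail probability, while yours is shorter and more transparent. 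Note that both arguments (yours and the paper's) implicitly require the deterministic coefficients involving $d_i^{-2\gamma}$ and $\sigma_{v,i}^2$ to remain bounded above and below as $N\to\infty$ --- you flag this; the paper buries it in the constants $\eta$ and $\tau$ --- so neither is more rigorous on that point. To fully close your sketch you would still need to write out the quantitative tail bounds for the max and the sum, but there is no gap in the idea.
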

\begin{proof}
See Appendix A.
\end{proof}

According to the above theorem, when $N\to\infty$, the term $P_{n|n-1}\mathbf{h}_n^H\mathbf{D}^{-1}\mathbf{h}_n$ is the dominant factor in the denominator of the bounds on the sum transmit power, and we have the following asymptotic expression
\begin{equation}\label{eq:approxsolu}
\lim_{N\to\infty}P_T^{*}\simeq\frac{(P_{n|n-1}-\epsilon)\sigma_w^2}{\epsilon P_{n|n-1}\mathbf{h}_n^{H}\mathbf{D}^{-1}\mathbf{h}_n}\;.
\end{equation}
This expression illustrates that to achieve the same MSE, increasing the number of sensors reduces the total required transmit power of the network, as well as the required transmit power per sensor.  A similar observation was made in \cite{Leong:2011}. As shown later, our simulation results show that~(\ref{eq:approxsolu}) provides an accurate approximation to (\ref{eq:pmax}) as long as $\epsilon$ is not too small.

As a final comment on this problem, we note that~(\ref{eq:minsp}) is equivalent to
\begin{eqnarray}\label{eq:minspequ2}
\min_{\mathbf{A}} &&\textrm{tr}(\mathbf{A}\mathbf{D})\\
s. t. &&\textrm{tr}(\mathbf{A}\mathbf{E}_n)\ge\left(\frac{P_{n|n-1}}{\epsilon}-1\right)\sigma_{w}^2 \nonumber\\
&&\mathrm{rank}(\mathbf{A})=1 \nonumber\\
&&\mathbf{A}\succeq 0 \nonumber
\end{eqnarray}
for $\mathbf{A}=\mathbf{a}_n \mathbf{a}_n^H$.  Relaxing the rank-one constraint on $\mathbf{A}$, problem (\ref{eq:minspequ2}) becomes
\begin{eqnarray}\label{eq:minspsdp}
\min_{\mathbf{A}} &&\textrm{tr}(\mathbf{A}\mathbf{D})\\
s. t. &&\textrm{tr}(\mathbf{A}\mathbf{E}_n)\ge\left(\frac{P_{n|n-1}}{\epsilon}-1\right)\sigma_{w}^2\;,\nonumber\\
&&\mathbf{A}\succeq 0\; . \nonumber
\end{eqnarray}
Based on the complementary conditions between the dual and primal problems, we can prove that the solution to~(\ref{eq:minspsdp}) is rank one, and hence that the relaxed SDP yields the optimal $\mathbf{a}_n^*$.

\subsection{Minimizing Maximum Individual Transmit Power}

Here we focus on the problem of minimizing the maximum transmit power of the individual sensors while attempting to meet an MSE objective:
\begin{eqnarray}\label{eq:minmax}
\min_{\mathbf{a}_n} \max_i &&|a_{i,n}|^2(\sigma_{\theta}^2+\sigma_{v,i}^2)\\
s. t. &&P_{n|n}\le \epsilon \; . \nonumber
\end{eqnarray}
As in Section~\ref{sec:mseindiv}, we will convert the problem to a rank-relaxed SDP whose solution nonetheless obeys the rank constraint and hence provides the optimal result.  To proceed,
introduce an auxiliary variable $t$ and rewrite~(\ref{eq:minmax}) as
\begin{eqnarray}\label{eq:equival}
\min_{\mathbf{a}_n, t} &&t\\
s. t. &&P_{n|n}\le \epsilon\nonumber\\
&&|a_{i,n}|^2(\sigma_{\theta}^2+\sigma_{v,i}^2)\le t,\quad i=1,\cdots, N\nonumber\;.
\end{eqnarray}
Problem (\ref{eq:equival}) is equivalent to
\begin{eqnarray}\label{eq:equival2}
\min_{\mathbf{A}, t} &&t\\
s. t. &&\mathrm{tr}(\mathbf{A}\mathbf{E}_n)-\left(\frac{P_{n|n-1}}{\epsilon}-1\right)\sigma_{w}^2\ge 0\nonumber\\
&&\mathrm{tr}(\mathbf{A}\mathbf{D}_i)-t\le 0,\; i=1,\cdots, N\nonumber\\
&&\mathbf{A} \succeq 0 \nonumber \\
&&\textrm{rank}\left(\mathbf{A}\right)=1\nonumber\;,
\end{eqnarray}
where $\mathbf{A}=\mathbf{a}_n\mathbf{a}_n^H$, $\mathbf{E}_n$ is defined as in~(\ref{eq:minspequ}), and
$\mathbf{D}_i=\textrm{diag}\{0,\cdots,\sigma_{\theta}^2+\sigma_{v,i}^2,0,\cdots,0\}$, as before.

Relaxing the rank constraint and rewriting the problem to be in standard form, problem~(\ref{eq:equival2}) becomes
\begin{eqnarray}\label{eq:relaxed}
\min_{\tilde{\mathbf{A}}} &&\textrm{tr}(\tilde{\mathbf{A}}\mathbf{T})\\
s. t. &&\mathrm{tr}(\tilde{\mathbf{A}}\tilde{\mathbf{E}}_n)-\left(\frac{P_{n|n-1}}{\epsilon}-1\right)
\sigma_{w}^2\ge 0\nonumber\\
&&\mathrm{tr}(\tilde{\mathbf{A}}\mathbf{F}_i)\le 0, \quad i=1,\cdots, N\nonumber\\
&&\tilde{\mathbf{A}}\succeq 0\nonumber\;,
\end{eqnarray}
where
\begin{equation*}
\tilde{\mathbf{A}}=\left[\begin{array}{cc}\mathbf{A}& \mathbf{w} \\
                \mathbf{w}^H & t
                \end{array}\right], \qquad
\mathbf{T}=\left[\begin{array}{cc}\mathbf{0}&\mathbf{0}\\
                \mathbf{0}&1
                \end{array}\right], \qquad
\tilde{\mathbf{E}}_n=\left[\begin{array}{cc}\mathbf{E}_n& \mathbf{0} \\
                \mathbf{0}^T&0
                \end{array}\right], \qquad
\mathbf{F}_i=\left[ \begin{array}{cc} \mathbf{D}_i & \mathbf{0} \\ \mathbf{0}^T & -1 \end{array}
                \right] \; ,
\end{equation*}
and $\mathbf{w}$ is otherwise arbitrary.
Theorem~\ref{theorem3} establishes that the optimal solution to~(\ref{eq:minmax}) can be constructed from the solution to the above relaxed SDP.
\begin{theorem}\label{theorem3}
Define the optimal solution to problem~(\ref{eq:relaxed}) as
$\tilde{\mathbf{A}}^{*}$. Then
$\tilde{\mathbf{A}}^{*}_N=\tilde{\mathbf{a}}\tilde{\mathbf{a}}^{H}$ is rank-one and the optimal
solution to problem~(\ref{eq:minmax}) is given by
$\mathbf{a}^{*}_n=\tilde{\mathbf{a}}$\;.
\end{theorem}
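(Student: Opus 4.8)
The plan is to mirror the argument in the proof of Theorem~\ref{theorem1}: invoke strong duality between the relaxed SDP~(\ref{eq:relaxed}) and its dual, extract a complementary-slackness condition, and then use a rank inequality to force the leading principal block of the primal optimum to be rank-one. First I would form the Lagrange dual of~(\ref{eq:relaxed}). Introducing a multiplier $\lambda\ge 0$ for the MSE constraint and multipliers $\mu_i\ge 0$ for the $N$ per-sensor constraints, the dual is to maximize $\lambda\left(\frac{P_{n|n-1}}{\epsilon}-1\right)\sigma_w^2$ subject to
\begin{equation*}
\mathbf{G}=\mathbf{T}-\lambda\tilde{\mathbf{E}}_n+\sum_{i=1}^N\mu_i\mathbf{F}_i\succeq 0,\qquad \lambda,\mu_i\ge 0\;.
\end{equation*}
Strictly feasible points are readily constructed for both problems (for the primal, perturb a suitably scaled feasible rank-one gain matrix by $\delta\mathbf{I}$ and enlarge $t$; for the dual, pick the $\mu_i$ positive with $\sum_i\mu_i<1$ and $\lambda$ small), so Slater's condition yields strong duality and the complementary condition $\textrm{tr}(\tilde{\mathbf{A}}^{*}\mathbf{G}^{*})=0$. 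Since $\tilde{\mathbf{A}}^{*}\succeq 0$ and $\mathbf{G}^{*}\succeq 0$, this is equivalent to $\tilde{\mathbf{A}}^{*}\mathbf{G}^{*}=0$.

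Next I would exploit the block structure of the data matrices. Writing $\mathbf{G}^{*}=\textrm{diag}\{\mathbf{G}_N^{*},\,g\}$ with $g=1-\sum_i\mu_i^{*}$ and
\begin{equation*}
\mathbf{G}_N^{*}=\lambda^{*}\!\left(\tfrac{P_{n|n-1}}{\epsilon}-1\right)\mathbf{H}_n\mathbf{V}\mathbf{H}_n^H+\sum_{i=1}^N\mu_i^{*}\mathbf{D}_i-\lambda^{*}P_{n|n-1}\mathbf{h}_n\mathbf{h}_n^H\;,
\end{equation*}
the equation $\tilde{\mathbf{A}}^{*}\mathbf{G}^{*}=0$ gives in particular $\tilde{\mathbf{A}}_N^{*}\mathbf{G}_N^{*}=0$, so the columns of $\tilde{\mathbf{A}}^{*}_N$ lie in the null space of $\mathbf{G}_N^{*}$. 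The first two terms of $\mathbf{G}_N^{*}$ form a diagonal positive semidefinite matrix and the last is the rank-one perturbation $-\lambda^{*}P_{n|n-1}\mathbf{h}_n\mathbf{h}_n^H$. Applying $\textrm{rank}(\mathbf{M}+\mathbf{N})\ge|\textrm{rank}(\mathbf{M})-\textrm{rank}(\mathbf{N})|$ exactly as before yields $\textrm{rank}(\mathbf{G}_N^{*})\ge N-1$, hence $\textrm{rank}(\tilde{\mathbf{A}}^{*}_N)\le 1$. That $\tilde{\mathbf{A}}^{*}_N\neq 0$ follows because the active MSE constraint forces $\textrm{tr}(\tilde{\mathbf{A}}^{*}\tilde{\mathbf{E}}_n)=\textrm{tr}(\tilde{\mathbf{A}}^{*}_N\mathbf{E}_n)=\left(\frac{P_{n|n-1}}{\epsilon}-1\right)\sigma_w^2>0$, so $\textrm{rank}(\tilde{\mathbf{A}}^{*}_N)=1$; write $\tilde{\mathbf{A}}^{*}_N=\tilde{\mathbf{a}}\tilde{\mathbf{a}}^H$.

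Finally I would verify that $\mathbf{a}_n^{*}=\tilde{\mathbf{a}}$ is optimal for~(\ref{eq:minmax}). The block structure gives $\textrm{tr}(\tilde{\mathbf{A}}^{*}\mathbf{F}_i)=|\tilde{a}_i|^2(\sigma_\theta^2+\sigma_{v,i}^2)-t^{*}\le 0$ for each $i$, so $\max_i|\tilde{a}_i|^2(\sigma_\theta^2+\sigma_{v,i}^2)\le t^{*}$ (the optimal value of~(\ref{eq:relaxed})), while $\tilde{\mathbf{a}}^H\mathbf{E}_n\tilde{\mathbf{a}}\ge\left(\frac{P_{n|n-1}}{\epsilon}-1\right)\sigma_w^2$ is precisely the constraint $P_{n|n}\le\epsilon$ established in~(\ref{eq:minspequ}). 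Thus $\tilde{\mathbf{a}}$ is feasible for~(\ref{eq:minmax}) with objective at most $t^{*}$. Since~(\ref{eq:relaxed}) is a relaxation of~(\ref{eq:minmax}), $t^{*}$ lower bounds the optimum of~(\ref{eq:minmax}); the two bounds coincide, establishing optimality.

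The main obstacle is the rank step: the bound $\textrm{rank}(\mathbf{G}_N^{*})\ge N-1$ requires the diagonal positive semidefinite part of $\mathbf{G}_N^{*}$ to have full rank $N$, which in turn requires $\lambda^{*}>0$ (otherwise the $\mathbf{H}_n\mathbf{V}\mathbf{H}_n^H$ term vanishes and the bound can fail). I would close this by strong duality: the optimal dual objective $\lambda^{*}\left(\frac{P_{n|n-1}}{\epsilon}-1\right)\sigma_w^2$ equals the primal optimum $t^{*}$, and since $\epsilon<P_{n|n-1}$ forces nonzero transmission we have $t^{*}>0$, so $\lambda^{*}>0$. This guarantees that $\lambda^{*}\left(\frac{P_{n|n-1}}{\epsilon}-1\right)\mathbf{H}_n\mathbf{V}\mathbf{H}_n^H$ is positive definite and the rank argument goes through.
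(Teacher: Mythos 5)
Your proposal follows essentially the same route the paper takes (and explicitly defers to): form the dual of~(\ref{eq:relaxed}), establish strong duality via Slater, use complementary slackness and the block structure to get $\tilde{\mathbf{A}}_N^{*}\mathbf{G}_N^{*}=0$, and apply the rank inequality $\textrm{rank}(\mathbf{M}+\mathbf{N})\ge|\textrm{rank}(\mathbf{M})-\textrm{rank}(\mathbf{N})|$ to conclude $\textrm{rank}(\tilde{\mathbf{A}}_N^{*})=1$, exactly as in Theorem~\ref{theorem1}. In fact you supply a detail the paper omits --- verifying $\lambda^{*}>0$ via strong duality so that the diagonal part of $\mathbf{G}_N^{*}$ is full rank --- which is needed for the rank step to go through.
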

\begin{proof}
The dual of problem (\ref{eq:relaxed}) is given by
\begin{eqnarray}\label{eq:dual}
\max_{y_i,z} && \left(\frac{P_{n|n-1}}{\epsilon}-1\right)\sigma_{w}^2z\\
s. t. &&\mathbf{T}+\sum_{i=1}^{N}y_i\mathbf{F}_i-z\tilde{\mathbf{E}}_n\succeq 0\nonumber\\
&&y_1,\cdots,y_N,z\ge 0\;.\nonumber
\end{eqnarray}
Using an approach similar to the proof of Theorem~\ref{theorem1}, one can verify that both~(\ref{eq:relaxed}) and (\ref{eq:dual}) are strictly feasible, and that strong duality holds between the dual problem~(\ref{eq:dual}) and the primal problem~(\ref{eq:relaxed}). Based on the complementary conditions, it can be shown that $\text{rank}(\tilde{\mathbf{A}}_{N}^{*})=1$. For brevity the details of the proof are omitted.
\end{proof}

Similar to problems~(\ref{eq:opt1}) and (\ref{eq:minsp}), duality also exists between~(\ref{eq:opt4}) and~(\ref{eq:minmax}). 	Define the optimal solution to problem~(\ref{eq:opt4}) as $\mathbf{a}_n^{*}$ and the corresponding minimum MSE as $P_{n|n}^{*}$.  If we set $\epsilon=P_{n|n}^{*}$ in~(\ref{eq:minmax}), the optimal solution is also $\mathbf{a}_n^{*}$.

\section{MSE Outage Probability for Equal Power Allocation}\label{outage}

Here we calculate the MSE outage probability for the suboptimal
solution in which each sensor transmits with the same power.
The outage probability derived here can serve as an upper
bound for the outage performance of the optimal algorithm
with individual power constraints.  For equal-power transmission,
the transmit gain vector is given by
\begin{equation}
\mathbf{a}_{e}=\sqrt{\frac{P_T}{N}}\left[\frac{1}{\sqrt{\sigma_{\theta}^2+\sigma_{v,1}^2}},\cdots, \frac{1}{\sqrt{(\sigma_{\theta}^2+\sigma_{v,N}^2)}}\right]^T\;\nonumber,
\end{equation}
and the corresponding MSE is
\begin{equation}
P_{n|n}=\left(1-\frac{P_{n|n-1}\mathbf{a}_e^H\mathbf{h}_n\mathbf{h}_n^H\mathbf{a}_e}
{\mathbf{a}_e^H\mathbf{H}_n\mathbf{V}\mathbf{H}_n\mathbf{a}_e+P_{n|n-1}\mathbf{a}_e^H\mathbf{h}_n
\mathbf{h}_n^H\mathbf{a}_e+\sigma_w^2}\right)P_{n|n-1} \;\nonumber .
\end{equation}
As in Theorem~\ref{theorem2}, we will assume the Gaussian channel model of~(\ref{eq:hmodel}).
The outage probability $P_{out}=\mathrm{Pr}\left\{P_{n|n}>\epsilon\right\}$ is evaluated as follows:
\begin{eqnarray}
P_{out}&=&\mathrm{Pr}\left\{\frac{\mathbf{a}_e^H\mathbf{h}_n\mathbf{h}_n^H\mathbf{a}_e}{\mathbf{a}_e^H\mathbf{H}\mathbf{V}\mathbf{H}^H\mathbf{a}_e+\sigma_{w}^2}<\frac{P_{n|n-1}-\epsilon}{\epsilon P_{n|n-1}}\right\}\nonumber\\
&=&\mathrm{Pr}\left\{\mathbf{a}_e^H\mathbf{h}_n\mathbf{h}_n^H\mathbf{a}_e-\beta\mathbf{a}_e^H\mathbf{H}\mathbf{V}\mathbf{H}^H\mathbf{a}_e<\beta\sigma_{w}^2\right\}\nonumber\\
&=&\mathrm{Pr}\left\{\tilde{\mathbf{h}}_n^H\left(\mathbf{M}\tilde{\mathbf{a}}_e\tilde{\mathbf{a}}_e^H\mathbf{M}-\beta\mathbf{Q}\right)\tilde{\mathbf{h}}_n\le \frac{\beta\sigma_w^2}{P_T}\right\} \; \nonumber,
\end{eqnarray}where
\begin{eqnarray*}
\beta & = & \frac{P_{n|n-1}-\epsilon}{\epsilon P_{n|n-1}} \;,\;\tilde{\mathbf{a}}_e\;=\;\frac{1}{\sqrt{P_T}}\mathbf{a}_e\;,\\
\mathbf{M} & = & \mathrm{diag}\left\{\frac{1}{d_{1}^{\alpha}},\cdots, \frac{1}{d_{N}^{\alpha}}\right\}\;, \\ 
\mathbf{Q} & = & \mathrm{diag}\left\{\frac{\sigma_{v,1}^2}{N(\sigma_{\theta}^2+\sigma_{v,1}^2)d_{1}^{2\alpha}}
\;, \cdots, \; \frac{\sigma_{v,N}^2}{N(\sigma_{\theta}^2+\sigma_{v,N}^2)d_{N}^{2\alpha}}\right\}\;, \\ \tilde{\mathbf{h}}_n & = & [\tilde{h}_{1,n} \;, \cdots, \; \tilde{h}_{N,n}]^T \;.
\end{eqnarray*}

If we define $\mathbf{R}=\mathbf{M}\tilde{\mathbf{a}}_e\tilde{\mathbf{a}}_e^H\mathbf{M}-\beta\mathbf{Q}$, and label the eigenvalues of $\mathbf{R}$ as $\lambda_1,\cdots,\lambda_{N}$, then the random variable $\tilde{\mathbf{h}}_{n}^H\mathbf{R}\tilde{\mathbf{h}}_{n}$ can be viewed as the weighted sum of independent chi-squared random variables $\sum_{i=1}^{N}\lambda_i\chi_i(2)$.  From \cite{Al_Naffouri:2009}, we have
\begin{equation}\label{eq:cdf}
P_{out}=1-\sum_{i=1}^{N}\frac{\lambda_i^N}{\prod_{l\neq i}(\lambda_i-\lambda_l)}\frac{1}{|\lambda_i|}e^{-\frac{(P_{n|n-1}-\epsilon)\sigma_w^2}{\epsilon P_{n|n-1}P_T\lambda_i}}u(\lambda_i)\;,
\end{equation}where $u(\cdot)$ is the unit step function. Let $e_{1}\ge\cdots \ge e_{N}$ denote the eigenvalues of $\mathbf{Q}$, so that from Weyl's inequality \cite{So:1999} we have the following bounds for the $\lambda_i$:
\begin{eqnarray}
\tilde{\mathbf{a}}_{e}^H\mathbf{M}^2\tilde{\mathbf{a}}_{e}-\beta e_{1}\le &\!\!\!\lambda_{1}\!\!\!&\le \tilde{\mathbf{a}}_{e}^H\mathbf{M}^2\tilde{\mathbf{a}}_e-\beta e_{N}\;,\label{eq:lambda1}\\
-\beta e_{N-i+1}\le&\!\!\!\lambda_{i}\!\!\!&\le -\beta e_{N-i+2}\;,\quad 2\le i\le N\;\label{eq:lambda2}, 
\end{eqnarray}where $\tilde{\mathbf{a}}_{e}^H\mathbf{M}^2\tilde{\mathbf{a}}_{e}=\sum_{i=1}^{N}\frac{1}{N(\sigma_{\theta}^2+\sigma_{v,i}^2)d_{i}^{2\alpha}}$\;. From (\ref{eq:lambda1}), when $\beta$ is large, $\lambda_1$ is negative, and when $\beta$ is small enough, $\lambda_1$ is positive. Meanwhile, since all the eigenvalues of $\mathbf{Q}$ is positive, then according to (\ref{eq:lambda2}) we have that  $\lambda_{i}<0$ for $ 2\le i\le N$. Since only $\lambda_1$ can be positive, equation~(\ref{eq:cdf}) can be simplified as
\begin{equation}\label{eq:pouteq}
P_{out}=\left\{\begin{array}{lr}
1-\frac{\lambda_1^{N-1}}{\prod_{l\neq 1}(\lambda_1-\lambda_l)}e^{-\frac{(P_{n|n-1}-\epsilon)\sigma_w^2}{\epsilon P_{n|n-1}P_T\lambda_1}}&\textrm{$\lambda_1>0$}\\
1 &\textrm{$\lambda_1\le 0$}\;.
\end{array} \right.
\end{equation} From (\ref{eq:pouteq}), when the threshold $\epsilon$ is too small, $\beta=\frac{P_{n|n-1}-\epsilon}{\epsilon P_{n|n-1}}$ will be very large and $\lambda_1\le0$, then the outage probability $P_{out}$ equals 1, which means the MSE $P_{n|n}$ is larger than $\epsilon$ for every channel realization $\mathbf{h}_n$. For $P_T\to\infty$, the outage probability converges to
\begin{equation}
P_{out}=\left\{\begin{array}{lr}
1-\frac{\lambda_1^{N-1}}{\prod_{l\neq 1}(\lambda_1-\lambda_l)}&\textrm{$\lambda_1>0$}\\
1 &\textrm{$\lambda_1\le 0$}\;.
\end{array} \right.\nonumber
\end{equation}

\section{Simulation Results}\label{sec:nume}
To investigate the performance of the proposed optimization approaches, the
results of several simulation examples are described here. Unless otherwise indicated, the simulations are implemented with the following parameters: distance from the FC to the sensors $d_i$ is uniformly distributed over the interval $[2,8]$, path loss exponent is set to $\gamma=1$, the observation noise power $\sigma_{v,i}^2$ at the sensors is uniformly distributed over
$[0, 0.5]$, the power of the additive noise at the FC is set to $\sigma_w^2=0.5$, the parameter $\theta$ is assumed to satisfy $\sigma_{\theta}^2=1$, and the initial MSE is
given by $P_{0|-1}=0.5$.  The MSE shown in the plots is obtained by averaging over 300 realizations of $\mathbf{h}_n$.  Two different sum power constraints are considered in the simulations:
$P_T=300$ and $P_T=3000$. To fairly compare the results under sum and individual power constraints, we set $P_{T,i}=\frac{P_{T}}{N}$, which means that all sensors have the same maximum power when individual constraints are imposed.

Fig.~\ref{f1} plots the MSE as a function of the number of sensors in the network for both sum and individual power constraints.  The results demonstrate that compared with equal power allocation, the optimized power allocation significantly reduces the MSE; in fact, adding sensors
with equal power allocation actually increases the MSE, while the MSE
always decreases for the optimal methods.  The extra flexibility offered by
the global power constraint leads to better performance compared with
individual power constraints, but the difference in this case is not large. The lower bound on MSE in (\ref{eq:lb}) is also plotted to indicate the performance that that could be achieved with
$P_T\to\infty$.

Figs.~\ref{f3} and~\ref{f4} respectively examine sum and peak transmit powers required to achieve MSE values of 0.02, 0.04 and 0.1 for varying numbers of sensors.  As expected, individual power constraints lead to higher sum power requirements, while sum power constraints result in higher peak power.  Interestingly, the individual power constraints lead to roughly a doubling of the required total sum power to achieve the same MSE regardless of the number of sensors, whereas the increase in peak power for the sum constraint relative to individual power constraints grows with $N$, reaching a factor of~4 to~5 on average when $N=30$.  Fig.~\ref{f6} compares the minimum required sum transmit power to achieve various MSE values in~(\ref{eq:pmax}) with the approximate expression obtained in (\ref{eq:approxsolu}). When $\epsilon \ge 0.1$, the approximation is reasonably good even when $N$ is on the order of only 20 to 40.  The approximation is less accurate for tighter requirements on $\epsilon$, and requires a larger value of $N$ for the approximation to be valid.

The impact of the SNR at the FC on the sensor power allocation is illustrated in Fig.~\ref{f5} for a given channel realization and $N=30$ sensors.  The x-axis of each plot is ordered according to the channel gain of the sensors, which is shown in the upper left subfigure.  The upper right subfigure shows the variance of the measurement noise for each sensor, which for this example was uniformly drawn from the interval $[0.4,0.5]$ to better illustrate the effect of the channel gain.  The optimal power allocation for this scenario was found assuming both sum and individual power constraints under both low and high SNRs at the FC.  The middle subfigures show the power allocation for minimizing MSE assuming a low SNR case with $P_T=5$, while the bottom subfigures show the allocation for high SNR with $P_T=1000$.  Note that, as predicted by~(\ref{eq:optimal2a}), the power allocated to the sensors under the sum power constraint for low SNR tends to grow with the channel gain, while as predicted by~(\ref{eq:optimal2}), the allocated power is reduced with increasing sensor gain.  The explanation for the different behavior at low and high SNR can be explained as follows: when the SNR is high, the measurement noise will dominate the estimation error at the FC, and the higher the channel gain, the more the measurement noise is amplified, so the sensor nodes with higher channel gains will be allocated less power.  When the SNR is low, the additive noise at the FC will dominate the estimation error, the effect of the measurement noise can be neglected, so the nodes with higher channel gains will be allocated more power to increase the power of the desired signal.  For individual power constraints, we see that all of the sensors transmit with a maximum power of $P_T/N=5/30$ at low SNR, while at high SNR only the sensors with small channel gains use maximum power (in this case $P_T/N=1000/30$), and the power allocated to sensors with large channel gains decreases, as with the sum power constraint.

Finally, in Fig.~\ref{f7}, we show that our analytical expression in~(\ref{eq:pouteq}) for the outage probability under equal power allocation closely follows the simulation results for various transmit power levels for a case with $N=10$ sensors.  While these outage probabilities represent upper bounds for the optimal (and generally unequal) transmission gains, we note that these bounds are not particularly tight.  The outage probabilities achieved by the optimal algorithms are typically much lower than predicted by~(\ref{eq:pouteq}).

\section{Conclusion}\label{sec:conc}
In this paper, we considered the problem of optimally allocating power in an
analog sensor network attempting to track a dynamic parameter via a coherent 
multiple access channel.  We analyzed problems with either constraints on 
power or constraints on achieved MSE, and we also examined cases involving 
global sum and individual sensor power constraints.  While prior work had
been published for minimizing MSE under a sum power constraint and minimizing
sum power under an MSE constraint, we were able to derive closed-form solutions
that were simpler and more direct.  Going beyond the prior work, we derived new asymptotic 
expressions for the transmission gains that illustrated their limiting behavior
for both low and high SNR at the fusion center, and we found a simple expression
for the required sum transmit power when the number of sensors is large.
Furthermore, we showed how to minimize MSE under individual power constraints,
or minimize peak sensor power under MSE constraints, cases that had not been previously 
considered.  In particular, we demonstrated that solutions to these problems could
be found by solving a rank-relaxed SDP using standard convex optimization methods.
Finally, we derived an exact expression for the MSE outage probability for the
special case where the sensors transmit with equal power, and presented a number
of simulation results that confirmed our analysis and the performance of the
proposed algorithms. 

\appendices
\section{Proof of Theorem 2}
\begin{proof}
Since $\mathbf{D}^{-\frac{1}{2}}\mathbf{E}_n\mathbf{D}^{-\frac{1}{2}}$ is the sum of a rank-one and a diagonal matrix, we have the following bounds for $\lambda_{\max}\{\mathbf{D}^{-\frac{1}{2}}\mathbf{E}_n\mathbf{D}^{-\frac{1}{2}}\}$:
\begin{eqnarray}
\lambda_{\max}\{\mathbf{D}^{-\frac{1}{2}}\mathbf{E}_n\mathbf{D}^{-\frac{1}{2}}\}&<&
P_{n|n-1}\mathbf{h}_n^H\mathbf{D}^{-1}\mathbf{h}_n-\left(\frac{P_{n|n-1}}{\epsilon}-1\right)
\min_{i}\left\{\frac{|h_{i,n}|^2\sigma_{v,i}^2}{\sigma_{\theta}^2+\sigma_{v,i}^2}\right\}\nonumber\\
&=& P_{n|n-1}\mathbf{h}_n^H\mathbf{D}^{-1}\mathbf{h}_n-\xi  \label{eq:highbound}\;,\\
\lambda_{\max}\{\mathbf{D}^{-\frac{1}{2}}\mathbf{E}_n\mathbf{D}^{-\frac{1}{2}}\}&>&
P_{n|n-1}\mathbf{h}_n^H\mathbf{D}^{-1}\mathbf{h}_n-\left(\frac{P_{n|n-1}}{\epsilon}-1\right)
\max_{i}\left\{\frac{|h_{i,n}|^2\sigma_{v,i}^2}{\sigma_{\theta}^2+\sigma_{v,i}^2}\right\}\nonumber\\
&=& P_{n|n-1}\mathbf{h}_n^H\mathbf{D}^{-1}\mathbf{h}_n(1-\zeta) \; ,\label{eq:lowbound}
\end{eqnarray}
where we define
\begin{eqnarray*}
\xi & = & \left(\frac{P_{n|n-1}}{\epsilon}-1\right)\min_{i}\left\{\frac{|h_{i,n}|^2\sigma_{v,i}^2}
{\sigma_{\theta}^2+\sigma_{v,i}^2}\right\} \; ,\\
\zeta & = & \frac{\left(\frac{P_{n|n-1}}{\epsilon}-1\right)\max_{i}\left\{\frac{|h_{i,n}|^2\sigma_{v,i}^2}
{\sigma_{\theta}^2+\sigma_{v,i}^2}\right\}}{P_{n|n-1}\mathbf{h}_n^H\mathbf{D}^{-1}\mathbf{h}_n}\;.
\end{eqnarray*}

For any positive constant $\nu$, we have
\begin{eqnarray}
\textrm{Pr}\left\{\xi\ge \nu\right\}&\le& \textrm{Pr}\left\{\eta\min_{i}\left\{|\tilde{h}_{i,n}|^2\right\}\ge \tilde{\nu}\right\}\nonumber\\
&=&\textrm{Pr}\left\{\min_{i}\left\{|\tilde{h}_{i,n}|^2\right\}\ge \frac{\tilde{\nu}}{\eta}\right\}\nonumber\\
&=&\left(1-\textrm{Pr}\left\{|\tilde{h}_{i,n}|^2\le \frac{\tilde{\nu}}{\eta}\right\}\right)^N\nonumber\\
&\overset{(b)}{=}&e^{-\frac{N\tilde{\nu}}{\eta}} \; \nonumber,
\end{eqnarray}
where
\begin{eqnarray*} \tilde{\nu} & = & \frac{\nu}{\frac{P_{n|n-1}}{\epsilon}-1}\;, \\ \eta & = & \max_{i}\left\{\frac{\sigma_{v,i}^2}{(\sigma_{v,i}^2+\sigma_{\theta}^2)d_{i,n}^\gamma}\right\} \; ,
\end{eqnarray*}
and $(b)$ is due to the fact that $2|\tilde{h}_{i,n}|^2$ is a chi-square random variable with degree 2. When $N\to\infty$, we have
\begin{eqnarray}\label{eq:conp1}
\lim_{N\to\infty}\textrm{Pr}\left\{\xi\ge \nu\right\}&\le& \lim_{N\to\infty}e^{-\frac{N\tilde{\nu}}{\eta}}\\
&=&0 \; ,\nonumber
\end{eqnarray}
and thus $\xi$ converges to $0$ in probability.

From the definition of $\zeta$,
\begin{eqnarray}
\zeta&=&\left(\frac{P_{n|n-1}-\epsilon}{P_{n|n-1}\epsilon}\right)\max_{i}\left\{\frac{\frac{|h_{i,n}|^2
\sigma_{v,i}^2}{\sigma_{\theta}^2+\sigma_{v,i}^2}}{\sum_{k=1}^{N}\frac{|h_{k,n}|^2}{\sigma_{\theta}^2+
\sigma_{v,k}^2}}\right\}\nonumber\\
&<&\left(\frac{P_{n|n-1}-\epsilon}{P_{n|n-1}\epsilon}\right)\max_{i}\left\{\frac{\frac{|h_{i,n}|^2
\sigma_{v,i}^2}{\sigma_{\theta}^2+\sigma_{v,i}^2}}{\sum_{k=1,k\neq i}^{N}\frac{|h_{k,n}|^2}{\sigma_{\theta}^2+\sigma_{v,k}^2}}\right\}\nonumber\\
&<&\tau\max_{i}\left\{\frac{|\tilde{h}_{i,n}|^2}{\sum_{k=1,k\neq i}^{N}|\tilde{h}_{k,n}|^2}\right\},\nonumber
\end{eqnarray}
where
\begin{equation*}
\tau=\left(\frac{P_{n|n-1}-\epsilon}{P_{n|n-1}\epsilon}\right)
\frac{\max_i\left\{\frac{\sigma_{v,i}^2}{(\sigma_{\theta}^2+\sigma_{v,i}^2)d_{i,n}^{\gamma}}\right\}}
{\min_i\left\{\frac{1}{(\sigma_{\theta}^2+\sigma_{v,i}^2)d_{i,n}^{\gamma}}\right\}} \; .
\end{equation*}
For any positive constant $\mu$, we have
\begin{eqnarray}\label{eq:maxdistr}
\textrm{Pr}\left\{\zeta\ge\mu\right\}&=&1-\textrm{Pr}\left\{\zeta\le\mu\right\}\nonumber\\
&\le&1-\textrm{Pr}\left\{\max_{i}\left\{\frac{|\tilde{h}_{i,n}|^2}{\sum_{k=1,k\neq i}^{N}|\tilde{h}_{k,n}|^2}\right\}\le\tilde{\mu}\right\}\nonumber\\
&=&1-\left(\textrm{Pr}\left\{\frac{|\tilde{h}_{i,n}|^2}{\sum_{k=1,k\neq i}^{N}|\tilde{h}_{k,n}|^2}\le \tilde{\mu}\right\}\right)^N\nonumber\\
&=&1-\left(\textrm{Pr}\left\{\frac{\sum_{k=1,k\neq i}^{N}|\tilde{h}_{k,n}|^2}{|\tilde{h}_{i,n}|^2}\ge \frac{1}{\tilde{\mu}}\right\}\right)^N\nonumber\\
&=&1-\left(\textrm{Pr}\left\{\frac{\sum_{k=1,k\neq i}^{N}|\tilde{h}_{k,n}|^2}{(N-1)|\tilde{h}_{i,n}|^2}\ge \frac{1}{(N-1)\tilde{\mu}}\right\}\right)^N,
\end{eqnarray}
where $\tilde{\mu}=\mu/\tau$.

In (\ref{eq:maxdistr}), the random variable $X=\frac{\sum_{k=1,k\neq i}^{N}|\tilde{h}_{k,n}|^2}{(N-1)|\tilde{h}_{i,n}|^2}$ has an \emph{F}-distribution with parameters $N-1$ and $2$. Thus, the cumulative density function of $X$ is given by \cite{Zhang:2011}
\begin{equation}
F(x)=\left(\frac{(N-1)x}{(N-1)x+1}\right)^{N-1} \; , \nonumber
\end{equation}
and thus
\begin{eqnarray}
\left(\textrm{Pr}\left\{X\ge \frac{1}{(N-1)\tilde{\mu}}\right\}\right)^N&=&\left(1-\textrm{Pr}\left\{X\le \frac{1}{(N-1)\tilde{\mu}}\right\}\right)^N\nonumber\\
&=&\left(1-\frac{1}{(1+\tilde{\mu})^{N-1}}\right)^N\nonumber\\
&=&\left(1-\frac{1}{(1+\tilde{\mu})^{N-1}}\right)^{(1+\tilde{\mu})^{N-1}\frac{N}{(1+\tilde{\mu})^{N-1}}}\nonumber.
\end{eqnarray}
Since $\tilde{\mu}>0$ and hence $\lim_{N\to\infty}(1+\tilde{\mu})^{N-1}=\infty$, we have
\begin{eqnarray}
\lim_{N\to\infty}\left(\textrm{Pr}\left\{X\ge \frac{1}{(N-1)\tilde{u}}\right\}\right)^N&=&\lim_{N\to\infty}\left(1-\frac{1}{(1+\tilde{\mu})^{N-1}}\right)^{(1+\tilde{u})^{N-1}\frac{N}{(1+\tilde{\mu})^{N-1}}}\nonumber\\
&=&\lim_{N\to\infty}e^{\frac{N}{(1+\tilde{u})^{N-1}}}.\nonumber
\end{eqnarray}
Furthermore,
\begin{equation}
\lim_{N\to\infty}\frac{N}{(1+\tilde{\mu})^{N-1}}=\lim_{N\to\infty}
\frac{1}{(1+\tilde{\mu})^{N-1}\ln(1+\tilde{\mu})}=0 \; ,\nonumber
\end{equation}
and thus
\begin{equation}\label{eq:maxpdf}
\lim_{N\to\infty}\left(\textrm{Pr}\left\{X\ge \frac{1}{(N-1)\tilde{\mu}}\right\}\right)^N=1\;.
\end{equation}
Substituting (\ref{eq:maxpdf}) into (\ref{eq:maxdistr}) yields
\begin{equation}\label{eq:conp2}
\lim_{N\to\infty}\textrm{Pr}\left\{\zeta\ge\mu\right\}=0 \; ,
\end{equation}
and we conclude that when $N\to\infty$, $\zeta$ converges to $0$ in probability.  The proof of the theorem is completed by substituting the results of (\ref{eq:highbound}), (\ref{eq:lowbound}), (\ref{eq:conp1}) and  (\ref{eq:conp2}) into (\ref{eq:pmax}).
\end{proof}

\bibliography{reference}
\newpage

\begin{figure}
\centering
\includegraphics[height=3.5in, width=4.5in]{}
\caption{MSE vs. number of sensors for $P_{\max}=300$ or $3000$.}\label{f1}
\end{figure}

\begin{figure}
\centering
\includegraphics[height=3.5in, width=4.5in]{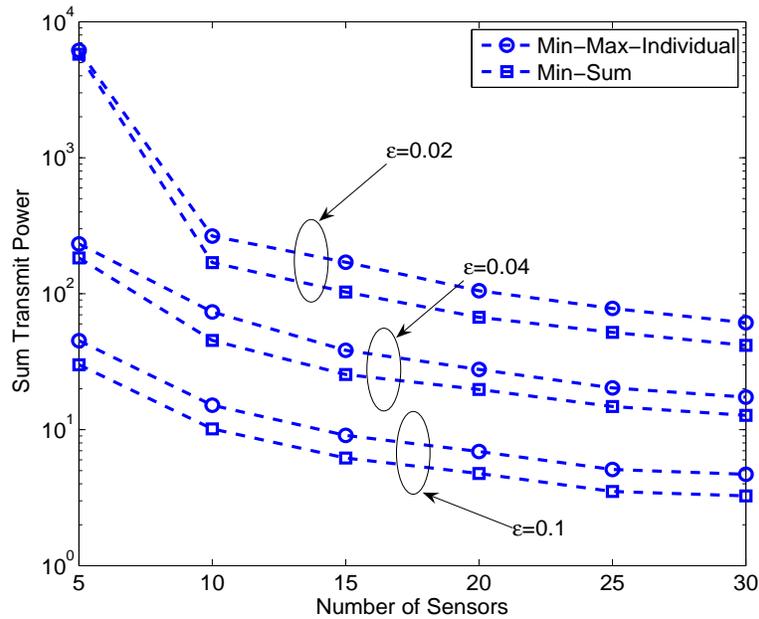}
\caption{Required sum transmit power vs. number of sensors for various MSE constraints.}\label{f3}
\end{figure}

\begin{figure}
\centering
\includegraphics[height=3.5in, width=4.5in]{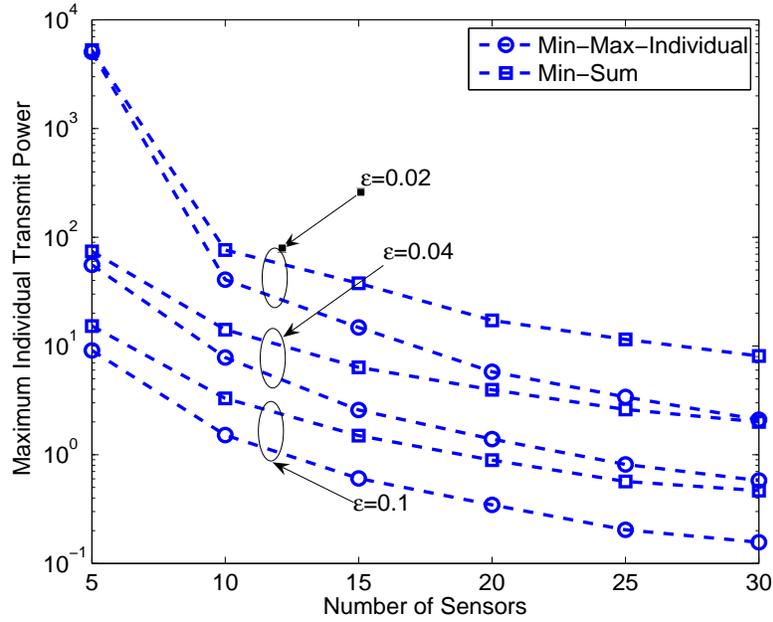}
\caption{Maximum individual transmit power vs. number of sensors for various MSE constraints.}\label{f4}
\end{figure}

\begin{figure}
\centering
\includegraphics[height=3.5in, width=4.5in]{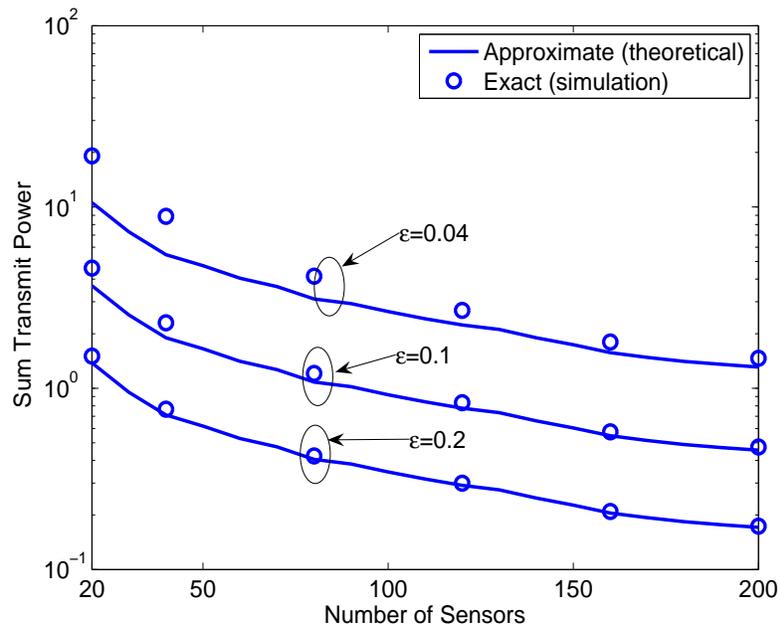}
\caption{Exact and approximate sum transmit power vs. number of sensors.}\label{f6}
\end{figure}

\begin{figure}
\centering
\includegraphics[height=6in, width=7in]{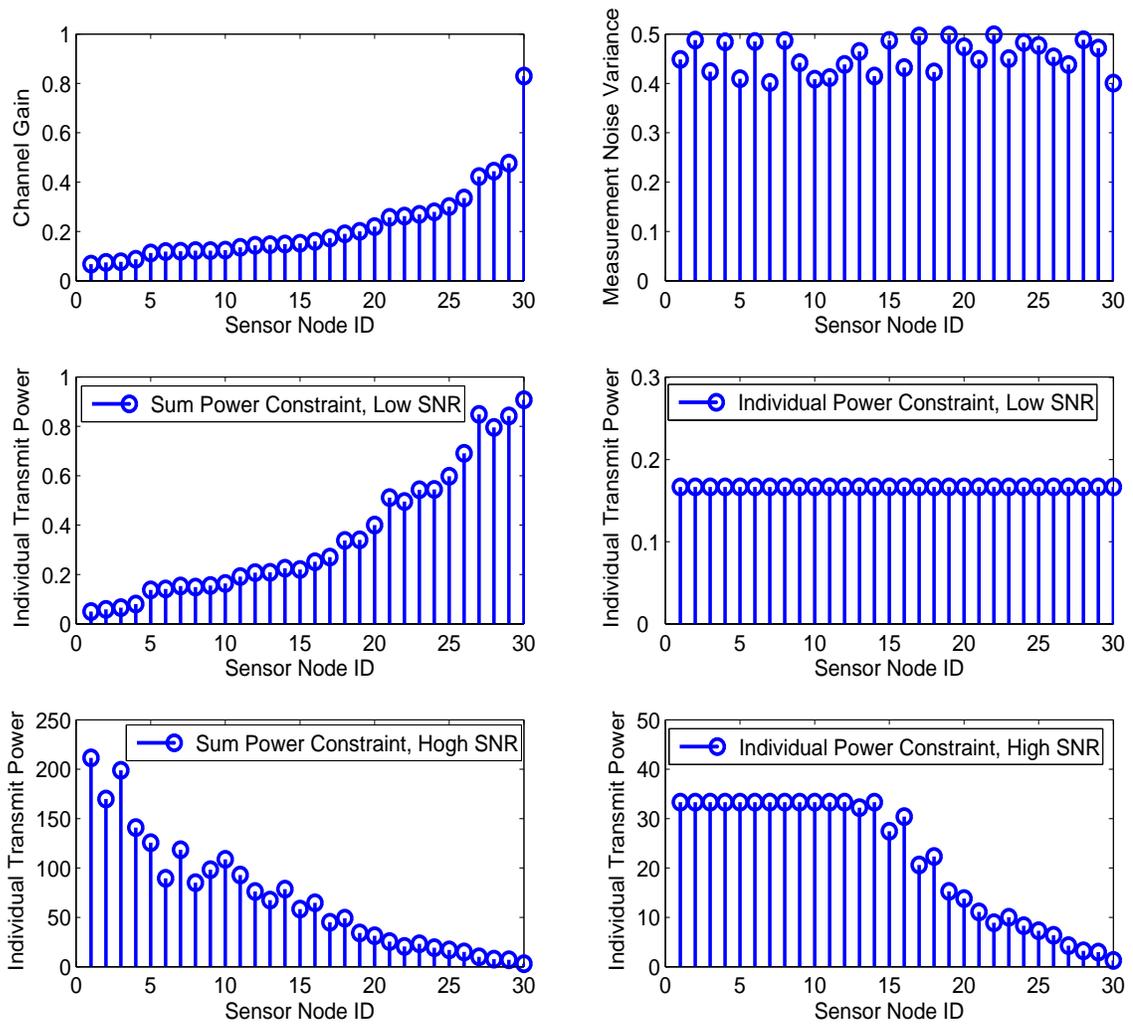}
\caption{Stem plot for the channel gain, measurement noise variance and the individual transmit power allocated to the sensor nodes. The x-axis denotes the sensor node ID and the sensor nodes are indexed according to their channel gain, in ascending order. For the high SNR case the total transmit power is set to $P_T=1000$ and for the low SNR case the total transmit power is $P_T=5$.}\label{f5}
\end{figure}

\begin{figure}
\centering
\includegraphics[height=3.5in, width=4.5in]{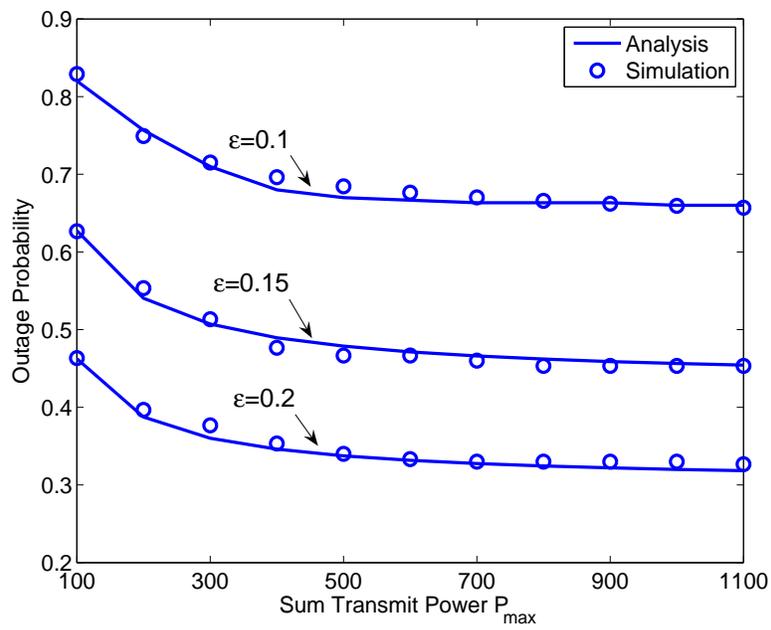}
\caption{MSE outage probability for equal power allocation vs.
sum transmit power for $N=10$ sensors.}\label{f7}
\end{figure}

\end{document}